\newtheorem{theorem}{Theorem}[section]
\newtheorem{lemma}{Lemma}[section]
\title{Drawing HV-Restricted Planar Graphs\thanks{A preliminary version of this paper appeared in proceedings of the 11th Latin American Symposium on Theoretical Informatics (LATIN 2014)~\cite{DurocherF0M14}. Research of Stephane Durocher and Debajyoti Mondal is supported in part by the Natural Sciences and Engineering Research Council of Canada (NSERC). Research of Stefan Felsner is supported by DFG grant FE-340/11-1.}}
\author[1]{Stephane Durocher}
\author[2]{Stefan Felsner}
\author[3]{Saeed Mehrabi}
\author[4]{Debajyoti Mondal}
\affil[1]{{\small University of Manitoba, Winnipeg, Canada.
\texttt{durocher@cs.umanitoba.ca}}
}
\affil[2]{{\small Technische Universit\"at Berlin, Germany.
\texttt{felsner@math.tu-berlin.de}}
}
\affil[3]{{\small Carleton University, Ottawa, Canada.
\texttt{saeed.mehrabi@carleton.ca}}
}
\affil[4]{{\small University of Saskatchewan, Saskatoon, Canada.
\texttt{d.mondal@usask.ca}}
}
\date{}
\begin{document}

\maketitle

\begin{abstract}
A strict orthogonal drawing of a graph $G=(V, E)$ in $\mathbb{R}^2$ is a drawing of $G$ such that each vertex is mapped to a distinct point and each edge is mapped to a horizontal or vertical line segment. A graph $G$ is $HV$-restricted if each of its edges is assigned a horizontal or vertical orientation. A strict orthogonal drawing of an $HV$-restricted graph $G$ is \emph{good} if it is planar and respects the edge orientations of $G$. In this paper, we give a polynomial-time algorithm to check whether a given $HV$-restricted plane graph (i.e., a planar graph with a fixed combinatorial embedding) admits a good orthogonal drawing preserving the input embedding, which settles an open question posed by Ma\v{n}uch et al. (Graph Drawing 2010). We then examine $HV$-restricted planar graphs (i.e., when the embedding is not fixed), and give a complete characterization of the $HV$-restricted biconnected outerplanar graphs that admit good orthogonal drawings.
\end{abstract}

\section{Introduction}
\label{sec:introduction}
An \emph{orthogonal drawing} $\Gamma$ of an undirected graph $G=(V, E)$ in $\mathbb{R}^2$ is a drawing of $G$ in the plane, where each vertex of $G$ is mapped to  a distinct point and each edge of $G$ is mapped to an orthogonal polyline. $\Gamma$ is called \emph{planar} if no two edges in $\Gamma$ cross, however, two edges can meet at their common endpoints. Otherwise, the drawing is a \emph{non-planar orthogonal drawing}. An orthogonal drawing is \emph{strict} if every edge in the drawing is represented by a single vertical and horizontal line segment. Orthogonal drawings have been extensively studied over  the last two decades~\cite{DBLP:journals/comgeo/AlamKM17,battista,CornelsenK12,kant96,Tamassia87} because of its applications in many practical fields such as VLSI,  floor-planning, circuit schematics, and entity relationship diagrams.

Throughout the paper we refer to a planar graph with a fixed combinatorial embedding as a \emph{plane graph}. While drawing a plane graph, one must preserve the input embedding. But for planar graphs, any planar embedding can be chosen to draw the graph.  In 1987, Tamassia~\cite{Tamassia87} gave a polynomial-time algorithm to decide whether a plane graph  admits a strict orthogonal drawing preserving the input embedding. Later, Garg and Tamassia~\cite{garg2001} proved that deciding strict orthogonal drawability is NP-hard for planar graphs. However, polynomial-time algorithms have been developed for some well-known subclasses of planar graphs. For example, Di Battista et al.~\cite{battista} showed that the problem is polynomial-time solvable for series-parallel graphs and maximum-degree-three planar graphs. Nomura et al.~\cite{nomura} showed that a  maximum-degree-three outerplanar graph admits a planar strict orthogonal drawing if and only if it contains no cycle of three vertices.

Many variants of strict orthogonal drawings impose constraints on how the edges of the input graph have to be drawn. One of these variants describes the input graph $G$ as an \emph{$LRDU$-restricted graph} that associates each vertex-edge incidence of $G$ with an orientation (i.e., left (L), right (R), up (U), or down (D)), and asks to find an orthogonal drawing of $G$ that respects the prescribed orientations. Another variant considers \emph{$HV$-restricted graphs}, where the orientation of an edge is either horizontal (H), or vertical (V). By a \emph{good orthogonal drawing} we denote a planar strict orthogonal drawing that preserves the input edge orientations. In this paper, we only examine strict orthogonal drawings of $HV$-restricted plane and planar  graphs, and hence from now on we omit the term ``strict''.

\paragraph{$HV$-restricted plane graphs.} In 1985, Vijayan and Wigderson~\cite{vijayan1985} gave an algorithm that can decide in linear time whether an $LRDU$-restricted plane graph admits a good orthogonal  drawing,  but takes $O(n^2)$ time to construct such a drawing when it exists.  Later, Hoffmann and Kriegel~\cite{hoffmann1988} gave a linear-time construction. 

The task of characterizing $HV$-restricted plane graphs is more involved. The difficulty arises from the exponential number of choices for drawing $HV$-restricted paths, where the drawing of an $LRDU$-restricted path is unique, as illustrated in Figures~\ref{fig:intro0}(a)--(c). Recently, Ma\v{n}uch et al.~\cite{manuch2010} examined several results on the non-planar orthogonal drawings of $LRDU$- and $HV$-restricted graphs. They proved that non-planar orthogonal drawability maintaining edge orientations can be decided in polynomial-time for $LRDU$-restricted graphs, but is NP-hard for $HV$-restricted graphs. An interesting open question in this context, as posed by  Ma\v{n}uch et al.~\cite{manuch2010}, is to determine the complexity of deciding good orthogonal drawability of $HV$-restricted plane graphs. In this paper, we settle this question by giving a polynomial-time algorithm to recognize $HV$-restricted plane graphs. Here we assume that a planar embedding of the input graph is given, and our algorithm decides whether there exists a solution  that respects the input embedding.\smallskip

\paragraph{$HV$-restricted planar graphs.} A problem analogous to drawing $LRDU$-restricted graphs in $\mathbb{R}^2$ has been well studied in $\mathbb{R}^3$, but polynomial-time algorithms are known only for cycles~\cite{BattistaKLLW12} and theta graphs~\cite{GiacomoLP02}. The exponential number of possible orthogonal embeddings in  $\mathbb{R}^3$ makes the problem very difficult. Similarly, we find the problem of characterizing $HV$-restricted planar graphs that admit good orthogonal drawings  in $\mathbb{R}^2$ nontrivial even for outerplanar graphs, where the difficulty arises from the exponential number of choices for plane embeddings of the input graph.

To further illustrate the challenge, here we prove that the $HV$-restricted outerplanar graph of  Figure~\ref{fig:intro0}(d) does not admit a good orthogonal drawing. Suppose for a contradiction that $\Gamma$ is a good orthogonal drawing of $G$, and consider the drawing of the face $F=(a,b,...,f)$ in $\Gamma$. Since the edges $(a,b)$ and $(e,f)$ are horizontally oriented and $(a,f)$ is vertically oriented, either $(a,b)$ lies above $(e,f)$, or $(e,f)$ lies above $(a,b)$ in $\Gamma$. If $(a,b)$ lies above $(e,f)$ as in Figure~\ref{fig:intro0}(e), then the drawing of  cycle $a,b,i,j$  would create an edge crossing (irrespective of whether it lies inside or outside of $F$). Similarly, if $(e,f)$ lies above $(a,b)$ as in Figure~\ref{fig:intro0}(f), then the drawing of cycle $e,f,h,g$ would create an edge crossing. Drawing both of these cycles without crossing would imply a unique drawing of $F$, as shown in Figure~\ref{fig:intro0}(g). However, in this case we cannot draw the cycle $c,d,k,l$ without edge crossings.
 
\paragraph{Contributions.} We first show that the problem of whether an $HV$-restricted plane graph with $n$ vertices admits a good orthogonal drawing preserving the input embedding can be decided in $O(T(n))$ time, where $T(n)$ corresponds to the time to find a maximum flow in a multiple-source multiple-sink directed planar graph. If such a drawing exists, then it can be computed within the same running time. The best known running time for finding a maximum flow in a multiple-source multiple-sink directed planar graph is $O(n\log^2 n/\log \log n)$~\cite{MozesW10}.

We then give a linear-time characterization for biconnected outerplanar graphs that admit good orthogonal drawings. Our proof is constructive: given an $HV$-restricted outerplanar graph $G$, we can decide in linear time whether $G$ admits a good orthogonal drawing, and in $O(n^2)$-time, we can compute such a drawing if it exists. Note that the construction can choose any feasible embedding (i.e., the embedding is not fixed), and the output  is not necessarily outerplanar. A preliminary version of this paper appeared in proceedings of the 11th Latin American Symposium on Theoretical Informatics (LATIN 2014)~\cite{DurocherF0M14}. In the conference version, we only claimed to have a characterization for maximum-degree-three biconnected outerplanar graphs. 

Soon after the conference version of our paper, Didimo et al.~\cite{DidimoLP19} showed that the problem of whether an $HV$-restricted planar graph admits a good orthogonal drawing is NP-complete in general, but polynomial-time solvable for series-parallel graphs (thus also for outerplanar graphs). Their algorithm is based on a dynamic programming and runs in $O(n^4)$ time (and, in $O(n^3\log n)$ time, if the graph is of maximum-degree three). They asked whether a combinatorial characterization can be found, perhaps in terms of forbidden substructures. Our characterization for outerplanar graphs can be seen as a first step towards such a  combinatorial characterization. 

\paragraph{Organization.} In Section~\ref{sec:plane}, we show our algorithm for recognizing $HV$-restricted plane graphs. In Section~\ref{sec:planar2}, we give the necessary and sufficient conditions for a biconnected outerplanar graph to admit a good orthogonal drawing, and describe the drawing algorithm. We prove the necessity and sufficiency of those conditions in Section~\ref{sec:detail}, and  conclude the paper in Section~\ref{sec:con}.

\begin{figure}[t]
\centering
\includegraphics[width=\textwidth]{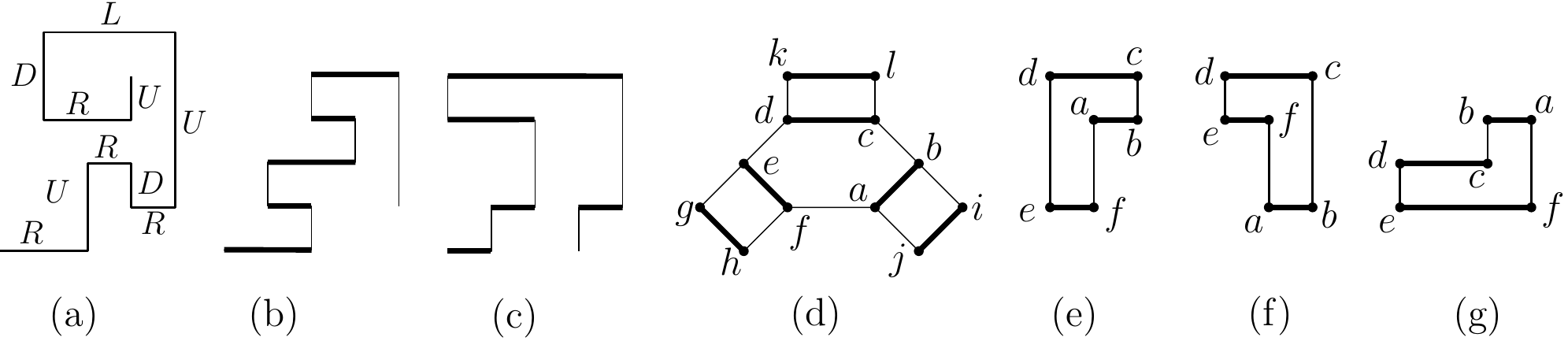}
\caption{(a) Drawing of an $LRDU$-restricted path. (b)--(c) Two different drawings of an $HV$-restricted path, where the horizontal and vertical orientations are shown in bold and thin lines, respectively. (d) An $HV$-restricted outerplanar graph $G$. (e)--(g) Drawings of the face $F$.}
\label{fig:intro0}%
\end{figure}

\section{Drawing $HV$-Restricted Plane Graphs}
\label{sec:plane}
In this section we give a polynomial-time algorithm that checks whether a given $HV$-restricted plane graph admits a good orthogonal drawing  preserving the input embedding. If the answer is affirmative, the algorithm certifies its answer by constructing a good orthogonal drawing.

We will first identify some necessary conditions and later show that they are also sufficient for the existence of the good drawing. The first condition is that every vertex has at most two incident edges with label $H$ and at most two with label $V$, and if the degree is four, the labels alternate. This condition is easily checked and from now on we assume it to be satisfied by the input.

Assume that a good drawing exists and consider a face $f$ in the
drawing.  The face is represented by a polygon, hence, if $f$ has $k$
corners, then the sum of all interior angles of  $f$ must be $(k-2)\pi$ (the
outer face makes an exception, here the angles sum to $(k+2)\pi$).
Since $f$ is an orthogonal polygon, the angle contributed by each
corner is a multiple of $\pi/2$. From the given edge orientations we
can infer the angle of some corners precisely: if a corner has two
incident edges with the same label, then it contributes an angle of
$\pi$, and if a corner corresponds to a vertex of degree one, it
contributes $2\pi$.  The interesting corners are those where the
incident edges have different labels, these corners contribute either
$\pi/2$ or $3\pi/2$. Dual to the angle condition for faces we also have the obvious
condition for vertices: around each vertex the sum of angles is
$2\pi$.

Associate a variable $x_c$ with each corner $c$ of the plane graph.
The above conditions can all be written as linear equations in these
variables. This yields a linear system $Ax=b$ and the unified
necessary condition that the system has a solution $\bar{x}$ where
each component $\bar{x}_c$ is in $\{1,2,3,4\}$. Such a solution is
 called a {\it global admissible angle assignment}.  Similar
quests for global angle assignments have been studied in rectangular
drawing problems, where Miura et al.~\cite{MiuraHN06} reduced the
problem to perfect matching, and in the context of orthogonal drawing with
bends, where Tamassia~\cite{Tamassia87} modeled an angle assignment problem
with minimum-cost maximum-flow.

Instead of directly using the  linear system stated above, we use the 
fact that the value of some variables $x_c$ is prescribed
by the input. The value for the remaining variables and hence
a global admissible angle assignment can be determined 
using a maximum-flow problem.

To construct the flow network start with the
angle graph $A(G)$ of the plane graph $G$.  The vertex set is
$V_{A(G)} = V_G \cup F_G$, i.e., the vertices of $A(G)$ are the
vertices and faces of $G$ or stated in just another way: the vertices
of $A(G)$ are the vertices of $G$ together with the vertices of the
dual $G^*$. The edges of $A(G)$ correspond to the corners of $G$: if
$v\in V_G$ and $f\in V_F$ are incident at a corner $c$ then there is an edge
$e_c=(v,f)$ in $E_{A(G)}$.  

Next step is to remove an edge $e_c=(v,f)$ from $A(G)$ when the value
of the variable $x_c$ is prescribed by the input, i.e., in the following situations:
\begin{enumerate}[(a)]
\item If the two edges of a corner have the same orientation
  and the edges are distinct, then the corner is assigned a $\pi$
  angle, i.e., $x_c=2$.
\item If the vertex corresponding to a corner is of
  degree one, then the corner is assigned a $2\pi$ angle, i.e.,
  $x_c=4$.
\item If the two edges of a corner have different orientations
  and the vertex is of degree three or more, then the corner is
  assigned a $\pi/2$ angle, i.e., $x_c=1$.  
\end{enumerate}
Let $A^\star(G)$ be the graph after removing all these edges.  Since
$A(G)$ is a plane graph the same is true for $A^\star(G)$. 
Figures~\ref{fig:flow}(a)--(b) show an example of a graph $G$ together
with the network $A^\star(G)$.

Since we want to use a fast maximum-flow algorithm, we describe the flow-problem using a planar flow network with multiple sources and sinks. It only remains to decide for some vertices of degree two in $G$ which of its corners is of size $\pi/2$ and which is of size $3\pi/2$. We model a $\pi/2$ corner with a flow of one unit entering the corresponding vertex.
 
An original vertex $v\in V_G$ is incident to an edge in $A^\star(G)$ if and only if $v$ is a vertex of degree two in $G$. With these vertices we assign a demand of $1$. The capacities of all the edges are also restricted to $1$. Finally, we have to set the excess of all $f\in F_G$. We know the total angle sum of $f$ and the angles that have been assigned in the reduction step from $A(G)$ to $A^\star(G)$. Since all the remaining angles are of size  $\pi/2$ or $3\pi/2$, we can compute how many of size $3\pi/2$ are needed, this number $z_f$ is the excess of $f$. (Note that if the computation yields a $z_f$ that is not an integer, then $G$ does not admit a good orthogonal realization). Similarly, we can also compute the number  $z'_f$ of $\pi/2$ angles that we need. For example, for the face $f_2$ in  Figure~\ref{fig:flow}(b), we consider  $3z_{f_2}+z'_{f_2}=18$ and $z_{f_2}+z'_{f_2}=10$, which solves to $(z'_{f_2},z_{f_2}) = (4,6)$. Since all edges $e_c\in E_{A^\star(G)}$ connect a source $f$ to a sink $v$, we may think of them as directed edges $f\to v$. Figure~\ref{fig:flow}(c) illustrates a maximum flow for the flow-network of Figure~\ref{fig:flow}(b).

\begin{figure}[t]
\centering
\includegraphics[width=0.7\textwidth]{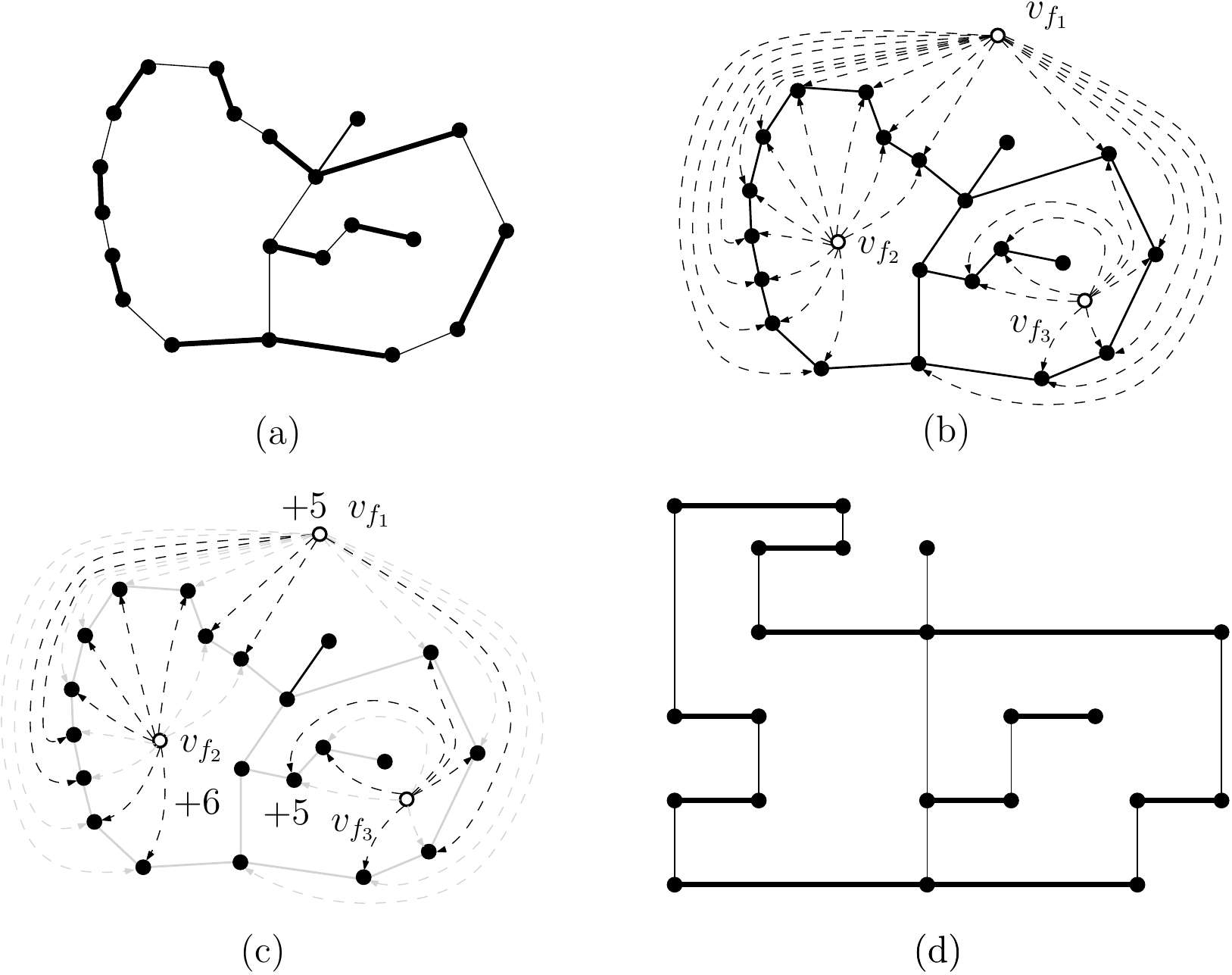}
\caption{(a) An $HV$-restricted plane graph $G$. The edges with horizontal (resp., vertical) orientations in $G$ are bold (resp., thin).  (b) The corresponding flow network  $A^\star(G)$ (induced by dashed edges).  (c) A feasible flow in $A^\star(G)$, where each black dashed edge corresponds to one unit of flow. (d) A corresponding orthogonal drawing of $G$.}
\label{fig:flow}
\end{figure}

We claim that a flow satisfying all the constraints (demand/excess/capacity)
exists if and only if $G$ admits a good orthogonal drawing preserving the input embedding.
If a flow $y \in \{0,1\}^{E_{A^\star(G)}}$ exists, then we get a solution vector for the
linear system by defining $x_c = 3 - 2y_c$ for all $e_c \in E_{A^\star(G)}$.
Together with the variables defined by conditions (a) -- (c) we obtain a global
admissible angle assignment which by definition satisfies:
\begin{enumerate}[1.]
\item The sum of angles around each vertex $v$ in $G$ is $2\pi$.
\item For every edge $(u,v)$ in $G$, the angle assignment at the
  corners of $u$ and $v$ is consistent with respect to the two faces
  that are incident to $(u,v)$.
\item The total assigned angle of every face $f$ is the angle sum
  required for polygons with that many corners. All angles are
  multiples of $\pi/2$, i.e., the induced representation is
  orthogonal.
\end{enumerate}

These conditions on an angle assignment are sufficient to construct a
plane orthogonal representation that respects the input
embedding~\cite{Tamassia87}. In fact the orthogonal drawing
can be computed in linear time. Figure~\ref{fig:flow}(d) shows an
orthogonal representation corresponding to the flow of Figure~\ref{fig:flow}(c).

For the converse, if $G$ admits a good orthogonal drawing $\Gamma$
respecting the input embedding, then the angles at the degree two
vertices readily imply a flow in the network satisfying the
constraints. We thus obtain the following theorem.
\begin{theorem}
Given an $HV$-restricted plane graph $G$ with $n$ vertices, one can check in $T(n)$ time whether 
 $G$ admits a good orthogonal drawing preserving the input embedding, and construct such a drawing 
  if it exists. Here, $T(n)$ is the time to find maximum flows in multiple-source multiple-sink directed planar graphs. 
\end{theorem}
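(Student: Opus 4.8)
The plan is to establish the theorem through the chain of reductions already outlined: first pin down a list of \emph{necessary} combinatorial conditions that can be checked in linear time; second, encode the remaining degrees of freedom as a flow problem on a planar network; and third, prove that a feasible flow exists if and only if a good orthogonal drawing exists, invoking Tamassia's linear-time construction of a drawing from an admissible angle assignment.

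Concretely I would proceed as follows. \emph{Step 1 (local checks):} verify that every vertex has at most two incident $H$-edges and at most two incident $V$-edges, with the labels alternating at degree-four vertices; otherwise reject. \emph{Step 2 (prescribed corners):} build the angle graph $A(G)$ and delete the corners whose angle is forced --- $\pi$ at corners between two distinct edges of equal label, $2\pi$ at degree-one vertices, $\pi/2$ at mixed corners of vertices of degree $\ge 3$ --- and observe, by a short case analysis over degree $\le 4$, that after this deletion the only original vertices still carrying corners are the degree-two vertices whose two edges have different labels, each carrying exactly its two corners, and that at such a vertex exactly one of the two corners must be $\pi/2$. \emph{Step 3 (face balances):} for each face $f$, subtract the already-assigned angles from the known interior-angle total ($(k-2)\pi$, or $(k+2)\pi$ for the outer face) and solve a $2\times 2$ linear system for the number $z'_f$ of $\pi/2$ corners and $z_f$ of $3\pi/2$ corners that $f$ still needs; if any of these is non-integral or negative, reject. \emph{Step 4 (flow network):} orient every surviving edge of $A^\star(G)$ from its face-endpoint to its vertex-endpoint, give it capacity $1$, assign supply $z'_f$ to each face-vertex and demand $1$ to each surviving original vertex, and ask for an integral feasible flow, interpreting one unit of flow into a corner as ``this corner is $\pi/2$''. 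Since $A^\star(G)$ is planar and the supply/demand can be realized by attaching pendant source/sink vertices (which preserves planarity), this is a multiple-source multiple-sink planar max-flow instance, solvable in $T(n)$ time, and by total unimodularity of the bipartite incidence matrix a fractional feasible flow exists iff an integral one does.

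The crux, and the step I expect to take the most care, is the equivalence. For the ``if'' direction, given a feasible flow $y$ I set $x_c = 3 - 2y_c$ on the surviving corners and keep the prescribed values elsewhere, and I must check the three conditions that make this an orthogonal representation in Tamassia's sense: the angle sum is $2\pi$ at every vertex (automatic at the prescribed vertices by the Step-2 case analysis, and forced by the demand-$1$ constraint at each surviving degree-two vertex); the angle sum is correct at every face (exactly what $z'_f$, $z_f$ and the supply constraint encode); and the labels are respected --- here I would argue that along any face boundary the realized orientation of an edge changes precisely at turning corners and is preserved at straight ($\pi$) corners, which is exactly how the $H/V$ labels were used to prescribe those corners, so the drawing produced by Tamassia's algorithm is genuinely $HV$-consistent and embedding-preserving. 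For the ``only if'' direction I read the corner angles off a good drawing $\Gamma$: respecting the labels forces the prescribed values, every surviving vertex has exactly one $\pi/2$ corner, and each face has the correct angle sum, so $y_c = (3 - x_c)/2$ is a feasible integral flow. Collecting the running times --- $O(n)$ for Steps 1--3, $T(n)$ for the flow, and linear time for Tamassia's drawing construction, all of which is $O(T(n))$ since $T(n) = \Omega(n)$ --- yields the stated bound.
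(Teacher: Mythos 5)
Your proposal is correct and follows essentially the same route as the paper: the same local degree/label checks, the same angle graph $A^\star(G)$ with prescribed corners removed, the same face-to-vertex unit-capacity flow with per-face supplies and demand~$1$ at the surviving degree-two vertices, and the same appeal to Tamassia's construction plus the reverse reading of angles off a good drawing. Your added remarks on integrality and on realizing supplies/demands while preserving planarity are standard technicalities the paper leaves implicit, not a different argument.
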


Since the maximum flow problem for a multiple-source and multiple-sink directed planar graph can be solved in 
$O(n\log^3n)$-time~\cite{BorradaileKMNW11}, one can check whether a given
$HV$-restricted plane graph that admits a good orthogonal drawing
preserving the input embedding in $O(n\log^3 n)$ time.
 Note that we precisely know the excess and demand of each node 
 in the flow network, and hence we are actually finding a feasible flow.
 There are faster algorithms in such cases; e.g., Klein et al.~\cite{KleinMW09}
 gave an algorithm to find a feasible integral flow in $O(n\log^2n)$-time. Later,
 Mozes and Wulff-Nilsen~\cite{MozesW10} improved the running time to $O(n\log^2n/\log\log n)$.

\section{Drawing Biconnected Outerplanar Graphs}
\label{sec:planar2}
In this section we give a polynomial-time algorithm to determine whether an arbitrary biconnected $HV$-restricted outerplanar graph admits a good orthogonal drawing, and construct such a drawing if it exists. A graph is \emph{outerplanar} if it admits a planar drawing with all its vertices on the outer face. Note that the good orthogonal drawing we produce is not necessarily an outerplanar embedding.

We first show that given an $HV$-restricted planar graph $G$, one can construct a corresponding graph $G'$  with maximum degree three such that $G$ admits a good orthogonal drawing if and only if $G'$ admits such a drawing.

\begin{figure}[t]
\centering
\includegraphics[width=\textwidth]{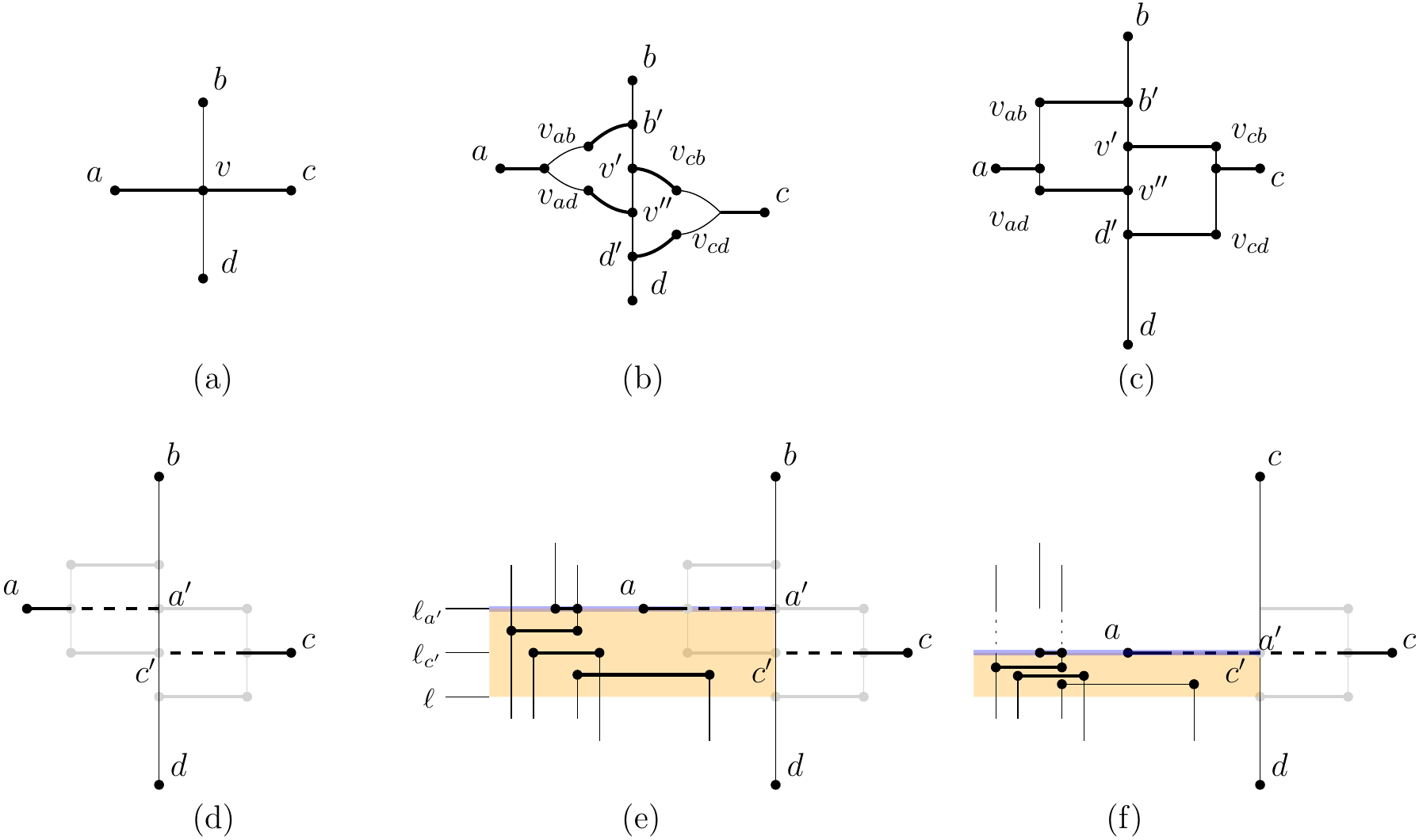}
\caption{(a) A degree four vertex $v$ with two $H$-edges and two $V$-edges. (b) A gadet to replace $v$. (c) A drawing of the gadget. (d)--(f) Transformation of a drawing of the gadget back into  a degree four vertex $v$.}
\label{fig:transformation}
\end{figure}

\begin{theorem}
\label{thm:transformation}
Let $G$ be an $HV$-restricted planar graph. For each vertex $v$ incident to two $V$-edges and two $H$-edges, replace $v$ with the vertex gadget, as illustrated in Figures~\ref{fig:transformation}(a)-(b). The resulting graph $G'$ admits a good orthogonal drawing if and only if $G$ admits a good orthogonal drawing. 
\end{theorem}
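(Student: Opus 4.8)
The plan is to establish both implications by pinning down how a degree-four vertex, and correspondingly the vertex gadget, must be drawn. The starting observation is a local one: if $v$ is a vertex of degree four incident to two $H$-edges and two $V$-edges (with the labels alternating around $v$, as we already assume), then in any good drawing of $G$ the four angles at $v$ are positive multiples of $\pi/2$ summing to $2\pi$, hence each is exactly $\pi/2$; so $v$ is drawn as a ``plus,'' with its two $H$-edges leaving horizontally in opposite directions, its two $V$-edges leaving vertically in opposite directions, and these four directions occurring in the rotation order forced by the labels. The gadget of Figure~\ref{fig:transformation}(b) — which I read as a $4$-cycle $v_1v_2v_3v_4$ carrying an alternating $H/V$ labelling on its cycle edges, with the original edge $e_i$ reattached at $v_i$ and each $e_i$ labelled so as to be forced collinear with one incident cycle edge — is engineered precisely to reproduce this behaviour: the same angle-sum argument applied to the $4$-cycle (four corners, each $\pi/2$ or $3\pi/2$, summing to $2\pi$) forces it to be drawn as an axis-aligned rectangle, and then at each $v_i$ the two equally-labelled edges form a straight ($\pi$) corner, which forces the pendant edge $e_i$ into the unique remaining orthogonal direction; a short case check shows that $e_1,\dots,e_4$ then leave the rectangle exactly in plus-position and in the same cyclic order as the rotation at $v$.

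With this in hand, the direction ``$G$ has a good drawing $\Rightarrow$ $G'$ has one'' is routine. Take a good drawing $\Gamma$ of $G$; around each degree-four vertex $v$ there is a small disk meeting $\Gamma$ only in $v$ and four short orthogonal stubs leaving in the four axis directions. Delete that disk and paste in a suitably scaled copy of the gadget drawing of Figure~\ref{fig:transformation}(c), joining the stubs of $e_1,\dots,e_4$ to $v_1,\dots,v_4$. The stubs meet the gadget's ports in the correct cyclic order and with matching orientations, so no crossings appear and all labels are respected; iterating over all such $v$ yields a good drawing of $G'$.

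For the converse, ``$G'$ has a good drawing $\Rightarrow$ $G$ has one,'' take a good drawing $\Gamma'$ of $G'$. By the analysis above, every gadget is drawn as a rectangle with $e_1,\dots,e_4$ in plus-position and in the right cyclic order, so the natural move is to contract each such rectangle to a point $v$ and let $e_1,\dots,e_4$ emanate from $v$ as a plus; this is the transformation in Figures~\ref{fig:transformation}(d)--(f). The care needed is that the rectangle may be large, so $e_1,e_3$ (resp.\ $e_2,e_4$) must be made collinear through $v$, and part of $G'$ may lie in the bounded region of the gadget cycle. Both issues are handled by standard orthogonal-drawing surgery: delete the grid rows and columns spanned by the (empty) rectangular gadget face, rerouting the four stubs onto a common row and column, and if some component of $G$ sits inside the gadget cycle, scale it into an arbitrarily small disk placed in the appropriate quadrant at $v$. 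Alternatively — and perhaps more cleanly — one can run the whole argument combinatorially on orthogonal representations: convert a good drawing to its corner-angle assignment, perform the ``blow-up'' / ``contraction'' locally on that assignment, verify that the vertex-sum, face-sum, and edge-consistency conditions together with the $H/V$ compatibility are preserved, and invoke the realization result used in Section~\ref{sec:plane} to obtain the drawing.

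I expect the main obstacle to be exactly the bookkeeping in the converse direction: checking that, regardless of how the gadget cycle is embedded and drawn inside $\Gamma'$ — which side carries the rest of the graph, which pendant edges go inside versus outside — the rectangle can be contracted without creating a crossing or violating an orientation, and that the induced rotation of $e_1,\dots,e_4$ around $v$ necessarily agrees with the rotation at $v$ in $G$. The plus-forcing lemma (angle sums together with the local $H/H$ and $V/V$ straight corners inside the gadget) is the engine that makes both of these go through; once it is established, each implication reduces to a local redrawing step.
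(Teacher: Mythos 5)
Your overall strategy matches the paper's: the forward direction is a local replacement of each degree-four vertex by a small drawing of the gadget, and the converse is a local ``un-replacement'' that first pins down how the gadget must be drawn and then performs a compaction/alignment surgery. However, your argument is built on a guess about the gadget that does not match Figure~\ref{fig:transformation}(b). The gadget is not a labelled $4$-cycle with a pendant edge at each corner; it is a (vertically oriented) path $P=(b,b',v',v'',d',d)$ carrying the two $V$-attachments at its endpoints, with two extra vertices $a$ and $c$ carrying the $H$-attachments and two small cycles attached to $a$ and $c$ whose sole purpose is to force $a$ and $c$ onto \emph{opposite sides} of $P$ in any good drawing. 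Consequently your key ``plus-forcing lemma'' (angle sums force a rectangle, hence the four pendant edges leave in plus-position) is a statement about a different object and cannot be invoked; in particular, the paper's gadget does \emph{not} force the two horizontal attachment points to lie at the same height, and closing exactly that gap is the substance of the paper's converse: one projects the horizontal edges at $a$ and $c$ onto $P$, obtaining points $a'$ and $c'$, and then vertically scales the region between the lines through $a'$ and $c'$ (bounded by $P$) to make the two projections coincide before placing $v$.

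Two further remarks. First, your converse surgery (``delete the grid rows and columns spanned by the gadget'') is the sketchiest step and is precisely where care is needed: collapsing rows globally can merge unrelated vertices or degenerate vertical edges elsewhere in the drawing, which is why the paper restricts the scaling to a region bounded by $P$ and a nearby empty horizontal line and then shifts only the features on $\ell_{a'}$ to the left of $a'$. Second, your observation that each pendant of an alternating $4$-cycle is forced to point outward (so nothing reached through the pendants can sit inside the gadget face) is a nice sanity check, and your imagined gadget may well be made to work, but since the theorem is stated relative to the specific gadget of Figure~\ref{fig:transformation}, a proof must establish the forcing properties of \emph{that} gadget; as written, your proposal does not.
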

\begin{proof}
First consider that $G$ admits a good orthogonal drawing. Then for each vertex $v$ incident to two $V$-edges and two $H$-edges, we must have its $V$-edges drawn on a vertical line, e.g., see Figure~\ref{fig:transformation}(a). Therefore, it is straightforward to replace $v$ with a good drawing of the vertex gadget, as illustrated in Figure~\ref{fig:transformation}(c). Hence, $G'$ must also have a good orthogonal drawing.

Consider now that $G'$ has a good orthogonal drawing $D$. We now show how to replace every vertex gadget with its corresponding vertex, which  eventually yields a good orthogonal drawing of $G$. Let $D_v$ be a drawing of a vertex gadget in $D$. Observe that the gadget has two cycles (one attached to $a$ and the other to $c$) that force $a$ and $c$ to be drawn on opposite side of the path $P=(b,b',v',v'',d',d)$.  We project the horizontal edges incident to $a$ and $c$ onto the drawing of the path $P$, e.g., see  Figure~\ref{fig:transformation}(d). Let $a'$ and $c'$ be the corresponding points. If $a'$ and $c'$ coincide, then we take that point as the location for vertex $v$. Otherwise, without loss of generality, assume that $a'$ has a higher $y$-coordinate than $c'$. Let $\ell_{a'}$ and $\ell_{c'}$ be the  horizontal lines through $a'$ and $c'$, respectively. Let $\ell$ be a horizontal line passing through a point below $c'$ such that $\ell$ intersects segment $c'd'$ and does not contain any vertex in $D$.  We  scale down the drawing inside the region bounded by $\ell_{a'},\ell$, and $P$ vertically such that the region bounded by $\ell_{a'},\ell_{c'}$ and $P$
becomes empty. We now move the vertices and segments that lie on $\ell_{a'}$  and to the left of $a'$, on the line $\ell_{c'}$. Finally, we extend the corresponding vertical segments. Since $a'$ and $c'$ now coincide, we can place the vertex $v$ at that location.
\end{proof}

Note that by Theorem~\ref{thm:transformation}, we can restrict our attention to the maximum degree three graphs. We first introduce some notation, and then describe the characterization for maximum degree three graphs (Theorem~\ref{thm:outer}).

By a \emph{segment} of $G$, we denote a maximal path in $G$ such that all the edges on that path have the same orientation. Let $G$ be a biconnected $HV$-restricted embedded outerplanar graph  with $\Delta=3$, where $\Delta $ is the maximum degree of $G$. Let $e$ be an edge of $G$. Then by $\lambda_e$ we denote the orientation of $e$ in $G$.   Let $F$ be an inner face of $G$. Note that $G$ is an embedded graph. Thus any  edge of $G$ is an \emph{inner edge} if it does not lie on the boundary of the outer face of $G$, and all the remaining edges of $G$ are the \emph{outer edges}. An inner edge $e$ of $G$ on the boundary of $F$ is called \emph{critical} if the two edges preceding and following $e$ on the boundary of $F$ have the same orientation that is different from $\lambda_e$. 
 An edge $e$ is \emph{$h$-critical} (resp., \emph{$v$-critical})
 if it is a critical edge and $\lambda_e=H$ (resp., $\lambda_e=V$).     
 For an  inner face $F$ in $G$, let $E_v(F)$ and $E_h(F)$ be the number of distinct edges of $F$ with
 vertical and horizontal orientations, respectively.  By $C_v(F)$ and $C_h(F)$ we
 denote the number of $v$-critical and $h$-critical edges of $F$. 

Let $pqrs$ be a rectangle, and let $a$ and $b$ be two points in the proper interior of $qr$ and $rs$, respectively,
 as shown in Figures~\ref{fig:init}(a) and (b). Construct a rectangle $sbcd$, where $c$ and $d$ lie outside of the rectangle $pqrs$. Then the region consisting of the rectangles $pqrs$ and $sbcd$
 is called a \emph{flag}. A flag includes all the edges on its boundary except the edge $aq$. The rectangles $pqrs$ and $sbcd$ are called the \emph{banner} and  \emph{post}, respectively. The edges $ar$ and $br$ are called the \emph{borders} of  the flag.

\subsection{Necessary and Sufficient Conditions}     
Throughout this section, $G$ denotes an arbitrary biconnected  $HV$-restricted embedded  outerplanar graph with $\Delta=3$; see Figure~\ref{fig:init}(c) for an example. We now prove the following theorem, which  is the main result of this section. 
\begin{theorem}
\label{thm:outer}
Let $G$ be a biconnected $HV$-restricted embedded outerplanar graph with maximum degree three. Then, $G$ admits a good planar orthogonal drawing if and only if the following three conditions hold.
\begin{enumerate}[($C_1$)]
\item For every inner face $f$, the sequence of orientations of the edges in  clockwise order contains $HVHV$ as a subsequence.
\item For every inner face $f$, if $C_v(f)=E_v(f)$, then  $C_v(f)$ is even. Similarly, if $C_h(f) = E_h(f)$, then $C_h(f)$ is even.
\item Every vertex of $G$ has at most two edges of the same orientation.
\end{enumerate}
\end{theorem}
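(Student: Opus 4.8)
The plan is to prove the two implications separately, using the following facts about a biconnected outerplanar graph $G$ with $\Delta=3$: the boundary of its outer face is a Hamiltonian cycle $C$; every inner edge (chord) therefore has both endpoints of degree exactly three; the boundary of every inner face $f$ is a simple cycle $Z_f$ of $G$; and in every planar orthogonal drawing each such cycle is realized by a simple rectilinear polygon.

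\textit{Necessity.} Fix a good orthogonal drawing $\Gamma$ of $G$, with any planar embedding (not necessarily the outerplanar one). Condition ($C_3$) is immediate: a vertex cannot have three pairwise interior-disjoint incident segments that are all horizontal, since there are only two horizontal directions, and likewise for vertical. For ($C_1$), the polygon realizing $Z_f$ turns at least four times, and its turns alternate between maximal horizontal and maximal vertical runs of $Z_f$; hence $Z_f$ has at least two maximal $H$-runs and two maximal $V$-runs, which is exactly the statement that its clockwise orientation sequence contains $HVHV$ cyclically. For ($C_2$), suppose $C_v(f)=E_v(f)$, so every vertical edge of $Z_f$ is a critical chord and the cyclic orientation sequence of $Z_f$ is $V\,H^{a_1}\,V\,H^{a_2}\cdots V\,H^{a_k}$ with $k=C_v(f)$ and all $a_i\ge1$. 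Let $P$ be the polygon realizing $Z_f$ in $\Gamma$. Its $2k$ turning corners occur exactly at the endpoints of the vertical chords, each of which has degree three; the remaining corners of $P$, lying inside the $H$-runs, are straight and irrelevant to the angle count. At an endpoint $v$ of a vertical chord $e_i$, the two edges of $Z_f$ at $v$ are $e_i$ and a horizontal edge, and the third edge $c_v$ of $v$ leads into the subtree $T_i$ of the weak dual that hangs off $e_i$; a short case check shows that the corner of $P$ at $v$ is $90^\circ$ if $c_v$ lies outside $P$ and $270^\circ$ if $c_v$ lies inside $P$. Because the pair of endpoints of $e_i$ is a separation pair of $G$ carrying the edge $e_i$, the whole block $T_i$ is drawn on one side of the segment $e_i$ in $\Gamma$, so the third edges at the two endpoints $u,w$ of $e_i$ lie on the same side of $P$; hence the two corners of $P$ at the endpoints of $e_i$ are of the same type. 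Therefore the number $r$ of reflex corners of $P$ is even. Since $P$ is a simple rectilinear polygon, $(\#\,90^\circ\text{ corners})-(\#\,270^\circ\text{ corners})=4$, i.e. $(2k-r)-r=4$, so $k=r+2$ is even, which is ($C_2$). The argument with $H$ and $V$ interchanged gives the second half of ($C_2$).

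\textit{Sufficiency.} I would argue by induction on the number of inner faces, exploiting that the weak dual $T$ of $G$ is a tree. When $G$ is a single cycle, ($C_2$) is vacuous, and a cyclic orientation sequence with at least two runs of each label is realized by an explicit simple rectilinear polygon (a monotone staircase with a single pocket), so ($C_1$) suffices. Otherwise pick a leaf $f$ of $T$ with unique chord $e=(u,w)$, and delete from $G$ the (necessarily degree-two) internal vertices of the subpath $Q$ that $f$ contributes to $C$; in the resulting biconnected outerplanar graph $G'$ the edge $e$ becomes an outer edge. Deleting $f$ leaves every surviving face's orientation sequence unchanged and can only turn chords non-critical, so $G'$ still satisfies ($C_1$)--($C_3$); by induction $G'$ has a good drawing $\Gamma'$, which, using the freedom to choose the outer face of an orthogonal representation, we may take with $e$ on the outer boundary. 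After scaling $\Gamma'$ down to create room, it remains to route the labelled path $Q$ on the free side of $e$ so that $e\cup Q$ bounds a simple rectilinear polygon, and moreover so that every chord of $f$ ends up on the outer boundary of the enlarged partial drawing, leaving room for the subtrees still to be attached. Condition ($C_1$) provides enough runs of each label for $\partial f$ to close up; ($C_3$) makes the angle assignments at the degree-three vertices of $Q$ locally realizable; and ($C_2$) is exactly what supplies the matching pairs of reflex corners (built by folding the appropriate subtrees into $\partial f$) that a parity count would otherwise obstruct. A maximal monochromatic run of $Q$ that cannot be drawn as one straight segment is routed as a \emph{flag}: the banner absorbs the surplus length while the post keeps the two borders pointing the way the rest of $\partial f$ needs them, and flags are nested when several runs must be detoured. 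I expect the sufficiency direction to be the main obstacle: one must simultaneously realize each admissible face's shape, keep all of its chords exposed on the current outer boundary, decide which subtrees to fold in so as to create exactly the reflex corners that ($C_2$) permits, and never be blocked by the already-placed drawing — which is precisely where the structural consequences of ($C_1$)--($C_3$) and the flag construction have to be combined with care.
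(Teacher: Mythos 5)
Your necessity argument is correct and in fact takes a slightly different route from the paper's for condition ($C_2$): where the paper traverses the two boundary paths of $f$ between a topmost and a bottommost horizontal edge and counts direction reversals (Lemmas~\ref{lem:bc} and~\ref{lem:bc2}), you invoke the identity $(\#\,\text{convex}) - (\#\,\text{reflex}) = 4$ for the simple rectilinear polygon realizing $Z_f$ and pair up the two corners at the endpoints of each critical chord. The shared key ingredient is the same in both proofs, namely that the component hanging off a critical chord lies entirely on one side of the polygon (your "same side" claim is exactly Lemma~\ref{lem:bc}); your parity bookkeeping via the turn-count formula is arguably cleaner and more global than the paper's traversal argument. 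One small caution: the corner at a chord endpoint $v$ is $90^\circ$ or $270^\circ$ as an interior angle of the \emph{closed curve} realizing $Z_f$, not of a face of the drawing ($Z_f$ need not be a face of the chosen embedding); your phrasing in terms of $c_v$ lying inside or outside $P$ handles this correctly, but it is worth making explicit.

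The sufficiency direction, however, is where the real content of the theorem lies, and your proposal does not prove it — as you yourself acknowledge. Two concrete problems with the leaf-peeling induction as sketched. First, the inductive hypothesis hands you \emph{some} good drawing $\Gamma'$ of $G'$ with no control over where the shared edge $e$ lands; "using the freedom to choose the outer face" is not a free operation for orthogonal drawings, since re-rooting the outer face changes the required angle sums ($(k-2)\pi$ versus $(k+2)\pi$) and hence the realizability of the angle assignment, and even topologically $e$ need not lie on the outer boundary of any embedding extending $\Gamma'$. Second, routing $Q$ on "the free side of $e$" so that $e\cup Q$ closes into a simple polygon, while the forced $90^\circ$ corners at the degree-three endpoints of $e$ and the criticality constraints on the already-drawn side are respected, is precisely the hard step; it is here that ($C_2$) must be invoked to show the closing-up cannot fail, and your sketch asserts this rather than proves it. The paper resolves both issues by inverting the order of the induction: it draws an arbitrary face first and then attaches faces outward by DFS on the dual tree, maintaining the invariant of Lemma~\ref{cor:p} that every face is drawn inside a \emph{flag} whose post is kept away from the critical border — this is what guarantees room for all subsequent attachments. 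The proof of Lemma~\ref{cor:p} (the three-phase flip construction and the argument that inability to close the cycle forces $C_v(f)$ or $C_h(f)$ to be odd, violating ($C_2$)) occupies all of Section~\ref{sec:detail} and has no counterpart in your proposal. Until an invariant of this strength is formulated and proved, the sufficiency half remains open.
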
 

It is straightforward to see that steps $C_1$--$C_3$ can be tested in linear time. 

\subsubsection{Necessity} We first show that Conditions ($C_1$)--($C_3$) are necessary for $G$ to admit a good planar orthogonal drawing. 
 We use the following two lemmas.
\begin{lemma}\label{lem:bc}
Let $\Gamma$ be a good orthogonal drawing of $G$, and let $(b,c)$ be an inner edge of 
 some face $f=(a,b,c,d,\ldots,a)$. Figure~\ref{fig:init}(d) illustrates
 an example. Since $(b,c)$ is an inner edge, there is another face
 $f'=(b,x,\ldots,y,c,b)$ that does not contain any edge of $f$ except $(b,c)$.
 Let $H^+$ and $H^-$ be the two half-planes determined by the straight line through $(b,c)$. If $(b,c)$ is a critical edge in $f$, then either both $(a,b)$ and $(c,d)$ lie in $H^+$, or both lie in $H^-$.
\end{lemma}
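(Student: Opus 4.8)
The plan is to argue by contradiction about the positions of $(a,b)$ and $(c,d)$ relative to the line $\ell$ through $(b,c)$, using the fact that $(b,c)$ is a critical edge together with the planarity and orientation constraints around the second face $f'$. Suppose, without loss of generality by the orientation-symmetry of the statement, that $\lambda_{(b,c)}=H$, so that $(b,c)$ is horizontal and $\ell$ is a horizontal line; since $(b,c)$ is critical, the edges $(a,b)$ and $(c,d)$ are both vertical. Each of $(a,b)$ and $(c,d)$ therefore leaves the endpoints $b$ and $c$ either going \emph{up} (into $H^+$) or \emph{down} (into $H^-$). The claim is that they go to the same side; so assume for contradiction that they go to opposite sides, say $(a,b)\subseteq H^+$ and $(c,d)\subseteq H^-$.

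The key step is to look at the other face $f'=(b,x,\ldots,y,c,b)$ incident to $(b,c)$. Since the drawing is planar and $(b,c)$ is horizontal, the whole of $f'$ (apart from the edge $(b,c)$ itself) must lie on one side of $\ell$ — say in $H^+$ — because $f'$ is a simple polygon having $(b,c)$ on its boundary; its interior lies entirely on one side of that boundary edge. But $f$ also has $(b,c)$ on its boundary, so the interior of $f$ lies on the opposite side, in $H^-$ (faces on the two sides of an edge occupy opposite half-planes locally, and since the faces are polygons this is a global statement for this edge). Now the edge $(a,b)$, being on the boundary of $f$, and leaving $b$ upward into $H^+$, would have to bound the interior of $f$, which lies in $H^-$ — this forces a contradiction: walking along the boundary of $f$ from $(b,c)$ to $(a,b)$, the interior of $f$ sits locally below $(b,c)$ but the edge $(a,b)$ sticks up above $\ell$ on the same side as the interior of $f'$, so either $(a,b)$ crosses into the region occupied by $f'$ (contradicting planarity) or the turn at $b$ is inconsistent with $f$'s interior being in $H^-$. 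A symmetric argument applies if $f'$ lies in $H^-$ instead, swapping the roles of $(a,b)$ and $(c,d)$.

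The main obstacle I anticipate is making the ``local side of an edge'' reasoning fully rigorous: one must phrase carefully that a simple orthogonal polygon lies entirely in one closed half-plane determined by any one of its edges, and that the two faces incident to a single straight edge occupy the two opposite half-planes. Once that is pinned down, the contradiction is immediate from the criticality hypothesis (which guarantees $(a,b)$ and $(c,d)$ are perpendicular to $(b,c)$, hence genuinely point ``up'' or ``down'' rather than continuing along $\ell$). I would state and use the following small fact without detailed proof: \emph{if $e$ is a maximal straight segment on the boundary of a simple orthogonal polygon $P$, then $P$ lies in one of the two closed half-planes bounded by the line through $e$.} Given this, the lemma follows by the case analysis above. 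A clean alternative phrasing of the conclusion: since $f'$ occupies one half-plane and $f$ the other, and both $(a,b)$ and $(c,d)$ must bound $f$ without entering the half-plane of $f'$, they must both point into the half-plane containing the interior of $f$, i.e. both lie in $H^+$ or both in $H^-$.
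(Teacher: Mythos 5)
Your reduction to the case $\lambda_{bc}=H$ with $(a,b)$ and $(c,d)$ vertical is fine, but the argument then rests entirely on the ``small fact'' you propose to use without proof --- that a simple orthogonal polygon lies in one closed half-plane bounded by the line through any maximal boundary edge --- and that fact is false. A $Z$-shaped hexagon with vertices $a=(0,1)$, $b=(0,0)$, $c=(1,0)$, $d=(1,-1)$, $p=(2,-1)$, $q=(2,1)$ has the maximal edge $(b,c)$ on the line $y=0$ and yet extends on both sides of that line. Worse, this hexagon refutes the very conclusion your fact would yield: $(b,c)$ is critical, $(a,b)$ and $(c,d)$ lie in opposite half-planes, and the drawing is a perfectly good orthogonal drawing of a $6$-cycle. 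The lemma survives only because $(b,c)$ is required to be an \emph{inner} edge, i.e., the second face $f'$ exists; a correct proof must genuinely use $f'$, whereas yours invokes it only through the false one-sidedness claim. Your fallback --- that $(a,b)$ pointing into $H^+$ would ``cross into the region occupied by $f'$'' --- is also a non sequitur, since $f'$ may have portions in $H^+$ far from $b$ that the segment $(a,b)$ never meets.

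What is true, and what the paper's proof uses, is a purely \emph{local} statement at the two endpoints. Because $(a,b)$ and $(c,d)$ are perpendicular to $(b,c)$ and each of $b$, $c$ carries a third incident axis-parallel edge (as $(b,c)$ is inner), the corner of $f$ at $b$ is forced to be exactly the quarter-plane between the rays $(b,a)$ and $(b,c)$, and likewise at $c$. Hence the direction of $(a,b)$ determines on which side of the open segment $(b,c)$ the face $f$ locally lies, and the direction of $(c,d)$ determines the same side; if the two determinations disagree you already have a contradiction. Equivalently, in the paper's phrasing, the opposite-sides assumption forces the neighbours $x$ and $y$ of $b$ and $c$ on $f'$ to lie one inside and one outside the closed curve bounding $f$, so the path $b,x,\ldots,y,c$, which shares no edge with $f$, would have to cross that curve, contradicting planarity. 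You need this rotation/corner argument (or an equivalent inside--outside parity argument) in place of the global half-plane claim.
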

\begin{proof}
Without loss of generality assume that $\lambda_{bc} = H$. Since $(b,c)$ is a critical edge, $\lambda_{bc}\not= \lambda_{ab}$ and $\lambda_{ab}=\lambda_{cd}$. If $(a,b)$ and $(c,d)$ lie in $H^+$ and $H^-$, respectively, then one of $x$ and $y$ must lie interior to $f$ and the other must lie exterior to $f$. Therefore, the path $b,x,...,y,c$ must create an edge crossing with $f$, which contradicts that $\Gamma$ is a good orthogonal drawing.
\end{proof}

Let $x(v)$ and $y(v)$ denote the $x$- and $y$-coordinates of a vertex $v$. We now use Lemma~\ref{lem:bc} to prove the following.
\begin{lemma}
\label{lem:bc2}
Let $\Gamma$ be a good orthogonal drawing of $G$. Let $f$ be an inner face in $\Gamma$,  and let $(a,b)$ and $(c,d)$ be two edges on $f$ (without loss of generality assume that $(a,b)$ is above $(c,d)$), where  $\lambda_{ab} = \lambda_{cd} = H$, $x(a)>x(b)$ and $x(d)>x(c)$.  Let $P=(a,b,...,c,d)$ be a path on the boundary of $f$ in counter-clockwise order, e.g., see the path $P_l$ in Figure~\ref{fig:init}(e). If all the vertically oriented edges of $P$ are critical, then  the number of such critical edges on $P$ must be odd. This property holds symmetrically for the path $(b,a,...,d,c)$.
\end{lemma}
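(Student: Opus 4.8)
The plan is to read off a parity invariant from the geometry of the path $P$ in $\Gamma$, with Lemma~\ref{lem:bc} doing the essential work.

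\textbf{Step 1 (coarse structure of $P$).} First I would observe that no two vertical edges of $P$ are consecutive along $P$: both end edges $(a,b)$ and $(c,d)$ of $P$ are horizontal, so every vertical edge $e$ of $P$ is internal to $P$, and since $e$ is critical and $P$ lies on the boundary of $f$, the edges immediately preceding and following $e$ along $P$ must be horizontal. Hence $P$ factors uniquely as
\[
  S_0,\; e_1,\; S_1,\; e_2,\; \dots,\; e_k,\; S_k,
\]
where $e_1,\dots,e_k$ are the critical vertical edges of $P$ listed in order and each $S_i$ is a nonempty maximal run of horizontal edges of $P$; the quantity we must show is odd is exactly $k$.

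\textbf{Step 2 (geometry of horizontal runs).} Next I would record two facts about a good orthogonal drawing. (i) A maximal run of horizontal edges $S_i=(v_0,\dots,v_m)$ is drawn as a collinear sequence of horizontal segments (all $v_j$ share a $y$-coordinate), and the $x$-coordinates are strictly monotone along $S_i$, since at an internal vertex the two incident horizontal segments must leave to opposite sides (otherwise they overlap). Write $d_i=L$ if $x$ decreases as $P$ is traversed along $S_i$, and $d_i=R$ otherwise. As the first edge of $S_0$ is $(a,b)$ with $x(a)>x(b)$ we get $d_0=L$, and as the last edge of $S_k$ is $(c,d)$ with $x(d)>x(c)$ we get $d_k=R$; in particular $k\ge 1$, since a single horizontal run cannot be both leftward and rightward. (ii) For a vertical edge $e_i=(p,q)$, with $p$ the last vertex of $S_{i-1}$ and $q$ the first vertex of $S_i$, let $\ell$ be the vertical line through $e_i$. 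Then the last edge of $S_{i-1}$ (incident to $p$) lies in the right open half-plane of $\ell$ exactly when $d_{i-1}=L$, and in the left one exactly when $d_{i-1}=R$; symmetrically, the first edge of $S_i$ (incident to $q$) lies in the left open half-plane exactly when $d_i=L$, and in the right one exactly when $d_i=R$.

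\textbf{Step 3 (applying Lemma~\ref{lem:bc} and counting).} Now I would apply Lemma~\ref{lem:bc} to each critical edge $e_i$ inside the face $f$: it is an inner edge because it is critical, and the edges preceding and following $e_i$ on the boundary of $f$ are precisely the last edge of $S_{i-1}$ and the first edge of $S_i$, because $e_i$ is internal to $P$. The lemma gives that these two edges lie in the \emph{same} half-plane of $\ell$; combined with fact (ii) this forces $d_{i-1}\ne d_i$ for every $i\in\{1,\dots,k\}$. Therefore $d_0,d_1,\dots,d_k$ is the alternating sequence $L,R,L,R,\dots$, so $d_i=L$ precisely when $i$ is even, and then $d_k=R$ forces $k$ to be odd, which is the claim. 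For the symmetric statement about the path $(b,a,\dots,d,c)$ the identical argument works with $L$ and $R$ swapped: its first horizontal run is rightward (as $x(a)>x(b)$) and its last is leftward (as $x(d)>x(c)$), so again the count of critical vertical edges must be odd.

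I expect the main obstacle to be the bookkeeping in Steps~2 and 3, not any conceptual point: one must be careful that the predecessor and successor of $e_i$ along the boundary of $f$ really are the last and first edges of the neighbouring horizontal runs (this uses that both end edges of $P$ are horizontal, so no vertical edge of $P$ is an end edge), and that ``leftward/rightward horizontal run'' is translated into ``which open half-plane of the vertical line through $e_i$'' without a sign error. Once the alternation $d_{i-1}\ne d_i$ is in hand, the parity conclusion is immediate.
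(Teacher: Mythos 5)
Your proof is correct and follows essentially the same route as the paper's: the paper also argues that, by Lemma~\ref{lem:bc}, every $v$-critical edge reverses the left/right direction of the horizontal traversal, and since $(a,b)$ is traversed leftward while $(c,d)$ is traversed rightward, an odd number of reversals is forced. Your version merely makes the decomposition into horizontal runs and the alternation of the directions $d_i$ explicit.
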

\begin{proof}
Consider a traversal of the edges of $P$ starting at $a$. Let $e$ be a $v$-critical edge on $P$, and let $e'$ and $e''$ be the edges preceding and following $e$, respectively. By Lemma~\ref{lem:bc},  $e'$ and $e''$ must lie on the same side of $e$ in $\Gamma$. Therefore, if we traverse $e'$ from left to right, then we have to traverse $e''$ from right to left, and  vice versa.  In other words, every $v$-critical edge reverses the direction of traversal. Since we traverse $(a,b)$ and $(c,d)$ from opposite directions and all the vertically oriented edges of $P$ are critical, we need  an odd number of $v$-critical edges on $P$ to complete the traversal.
\end{proof}

\begin{figure}[t]
\centering
\includegraphics[width=\textwidth]{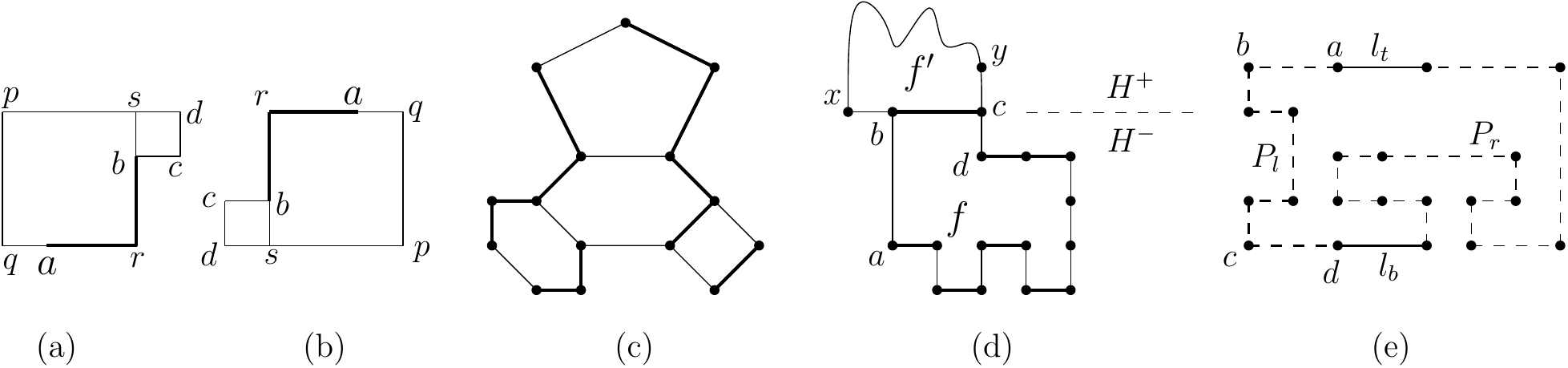}
\caption{(a)--(b) Two flags, where the borders are shown in bold. (c) An outerplanar graph $G$ with $\Delta=3$. (d) Illustration for Lemma~\ref{lem:bc}. (e) Illustration for $P_l$ and $P_r$, where $P_l$ contains three $v$-critical edges and $P_r$ contains five $v$-critical edges.}
\label{fig:init}
\end{figure}

We are now ready to prove the necessity  part of Theorem~\ref{thm:outer}. ($C_1$) holds for every orthogonal polygon, and thus for the faces of an orthogonal drawing. If ($C_2$) does not hold, then without loss of generality assume that for some $f$, $C_v(f)= E_v(f)$ and $C_v(f)$ is odd. Let $\Gamma_f$ be a drawing of $f$ such that $l_t$ and $l_b$ are topmost and bottommost horizontal edges in $\Gamma_f$. Then we can find two disjoint paths $P_l$ and $P_r$ by traversing $f$ counter-clockwise and clockwise from $l_t$ to $l_b$, respectively, as shown in Figure~\ref{fig:init}(e).  Since $C_v(f)$ is odd, either $P_l$ or $P_r$ must contain an even number of $v$-critical edges, which contradicts Lemma~\ref{lem:bc2}. If ($C_3$) does not hold at some vertex $v$, then the drawing of its incident edges would contain edge overlapping. 

\subsubsection{Sufficiency} 
\label{suff}
 To prove the sufficiency we assume that $G$ satisfies $(C_1)$--$(C_3)$, and then construct a good orthogonal drawing of $G$. The idea is to first draw an arbitrary inner face $f$ of $G$, and then the other faces of $G$ by a depth first search on the faces of $G$ starting at $f$.

Let $f=(v_1,v_2,\ldots,v_{r},\ldots,v_{s},\ldots,v_{t}(=v_1))$  be the vertices of $f$ in clockwise order. Let  $P=(v_r,\ldots,v_s,\ldots,v_t)$ be a maximal path on $f$ such that all the edges on path $P_v=(v_r,\ldots,v_s)$ (resp., $P_h=(v_s,\ldots,v_t)$) have vertical (resp., horizontal) orientation. The maximality of $P$ ensures that $\lambda_{v_1v_2}=V$ and $\lambda_{v_{r-1}v_r}=H$. An example of such a path $P$ in the face of Figure~\ref{fig:suf}(a) is $a(=v_r),b,c(=v_s),d,e(=v_t)$. Observe that $\lambda_{v_1v_2} = \lambda_{eg} =V$ and $\lambda_{v_{r-1}v_r} = \lambda_{ia} =H$. 

The following technical lemma allows us to draw the graph $G$; we prove Lemma~\ref{cor:p} in Section~\ref{sec:detail}.
\begin{lemma}
\label{cor:p}
Given an inner face $f$ of $G$ that satisfies conditions $(C_1)$--$(C_3)$, and a drawing of two consecutive  segments $P_h$ and $P_v$ of $f$. One can find a good orthogonal drawing $\Gamma_f$ of $f$ that satisfies the following properties.
\begin{enumerate}[-]
\item Lemma~\ref{lem:bc} holds for every critical edge $e$ in $\Gamma_f$, i.e., the two edges preceding and following $e$ lie in the same side of $e$.
\item $\Gamma_f$ is contained in a flag $F$ with borders $P_h$ and $P_v$.
\item If $P_h$ is a critical edge, then the post of $F$ (if exists) is incident to $P_v$. Similarly,  if $P_v$ is a critical edge, then the post of $\Gamma_f$ (if exists) is incident to $P_h$. (Note that since $\Delta = 3$, both $P_h$ and $P_v$ cannot be critical). 
\end{enumerate}
\end{lemma}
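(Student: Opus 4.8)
The plan is to prove Lemma~\ref{cor:p} by induction on the number of edges of the face $f$, building the drawing $\Gamma_f$ incrementally by ``folding'' the boundary of $f$ one edge at a time while maintaining the flag invariant. I would start from the given drawing of the two consecutive segments $P_h$ and $P_v$, which together form an ``L-shape''; this L-shape is itself (degenerately) contained in a flag whose banner is the axis-aligned bounding box of $P_h \cup P_v$ and whose post is empty, and it trivially satisfies the Lemma~\ref{lem:bc} side-condition. Then I would process the remaining edges of $f$ in boundary order (say, continuing clockwise from the endpoint $v_t$ of $P_h$ back around to the start $v_r$ of $P_v$), at each step extending the partial drawing by one edge and re-establishing the three invariants.

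The heart of the argument is the case analysis for a single extension step. When we add an edge $e$ whose orientation agrees with the previous edge, the partial path simply grows in a straight line and the flag can be enlarged (or the post extended) in the obvious way. When $e$ has a different orientation from the previous edge, there is a genuine choice of turning left or turning right; here I would use conditions $(C_1)$--$(C_3)$. Condition $(C_3)$ guarantees degree at most three so the turn is unambiguous at the vertex level; condition $(C_1)$ (the $HVHV$ subsequence) guarantees the face is ``rectilinearly non-degenerate'' so that a flag of the required shape exists at all; and condition $(C_2)$ (parity of critical edges) is exactly what is needed so that when we reach the last edge and must close the cycle up at $v_r$, the accumulated turning is consistent — i.e., the path returns to meet $P_v$ from the correct side, as dictated by Lemma~\ref{lem:bc2}. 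The critical-edge constraint from Lemma~\ref{lem:bc} is maintained greedily: whenever the edge being added together with its predecessor and (eventual) successor forms a critical configuration, we are forced to turn in the unique direction that keeps the two neighbours on the same side of $e$, and I would check that this forced direction never conflicts with staying inside the flag, again using $(C_2)$ to rule out the bad parity.

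I would then verify the final two structural claims. For containment in a flag $F$ with borders $P_h$ and $P_v$: the inductive invariant says the partial drawing always sits in such a flag, and once the cycle closes we have the full face inside it; the banner is the rectangle spanned by $P_h$ and $P_v$ and the post (if nonempty) is the single extra rectangle sticking out, matching the flag definition from Figures~\ref{fig:init}(a)--(b). For the post-placement claim: if $P_h$ is critical, then by Lemma~\ref{lem:bc} the edges flanking $P_h$ in $f$ lie on the same side of the line through $P_h$, which forces all of $\Gamma_f$ other than $P_v$ itself to lie on one side, hence the post must attach along $P_v$ rather than straddle $P_h$; the symmetric statement for a critical $P_v$ is identical, and since $\Delta=3$ at most one of $P_h, P_v$ is critical so the two cases do not collide.

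The main obstacle I anticipate is making the incremental case analysis genuinely exhaustive: at each step one must handle whether $e$ is an inner or outer edge, whether $e$ is critical, whether its turn direction is forced by Lemma~\ref{lem:bc} or free, and whether extending the flag requires growing the banner or the post — and one must show the flag never needs to ``grow backwards'' past $P_h$ or $P_v$. I expect this to require a careful choice of which endpoint of $f\setminus(P_h\cup P_v)$ to start the traversal from and a clean formulation of the inductive invariant (probably: the current free endpoint lies on the boundary of the current flag, and the current partial path respects Lemma~\ref{lem:bc} for all its interior critical edges). The parity bookkeeping tying $(C_2)$ to the closing-up step is the subtle quantitative point, but it is essentially a restatement of Lemma~\ref{lem:bc2}, so the real work is organizational rather than computational.
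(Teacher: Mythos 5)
Your proposal correctly identifies the ingredients ($C_1$ forces at least four turns, $C_3$ plus Lemma~\ref{lem:bc} forces the turn direction at critical edges, $C_2$ together with Lemma~\ref{lem:bc2} governs whether the cycle can close), but it defers the one step that carries the actual content: how the \emph{free} turn directions at non-critical edges are chosen so that the path returns to $v_r$ from the correct side. Condition $(C_2)$ only constrains the parity of critical edges in the extremal case $C_v(f)=E_v(f)$ (resp.\ $C_h(f)=E_h(f)$); in the generic case where some edges are non-critical, $(C_2)$ says nothing, and a greedy forward pass that makes locally ``forced'' turns at critical edges and arbitrary turns elsewhere has no guarantee of closing up, nor of staying inside a single flag (the partial path can spiral). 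Your sketch asserts that ``the accumulated turning is consistent'' by $(C_2)$, but that is exactly the claim that needs a mechanism, and the mechanism cannot be purely local.

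The paper resolves this with a draw-then-repair strategy rather than an edge-by-edge induction: it first draws $P'$ monotonically in $y$ so that every $v$-critical edge automatically satisfies Lemma~\ref{lem:bc}, and then performs a sequence of reflections of suffixes $\Psi_{l_e}$ about segments $l_e$ --- always choosing $l_e$ to contain a \emph{non-critical} edge so the reflection cannot break Lemma~\ref{lem:bc} at any critical edge --- to fix the $h$-critical edges and to reposition $v_{r-1}$ and $v_r$ relative to $P_v$ and $P_h$. The existence of a non-critical edge to flip about fails only when all edges of that orientation are critical, and precisely there $(C_2)$ plus Lemma~\ref{lem:bc2} yields the parity contradiction. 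If you want to salvage your incremental scheme, you would need an equivalent device: either a look-ahead rule that counts the critical edges remaining on the path before committing each free turn, or the suffix-flip repair pass. As written, the induction step ``re-establish the three invariants'' is not justified, so there is a genuine gap rather than merely organizational work.
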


\begin{figure}[t]
\centering
\includegraphics[width=\textwidth]{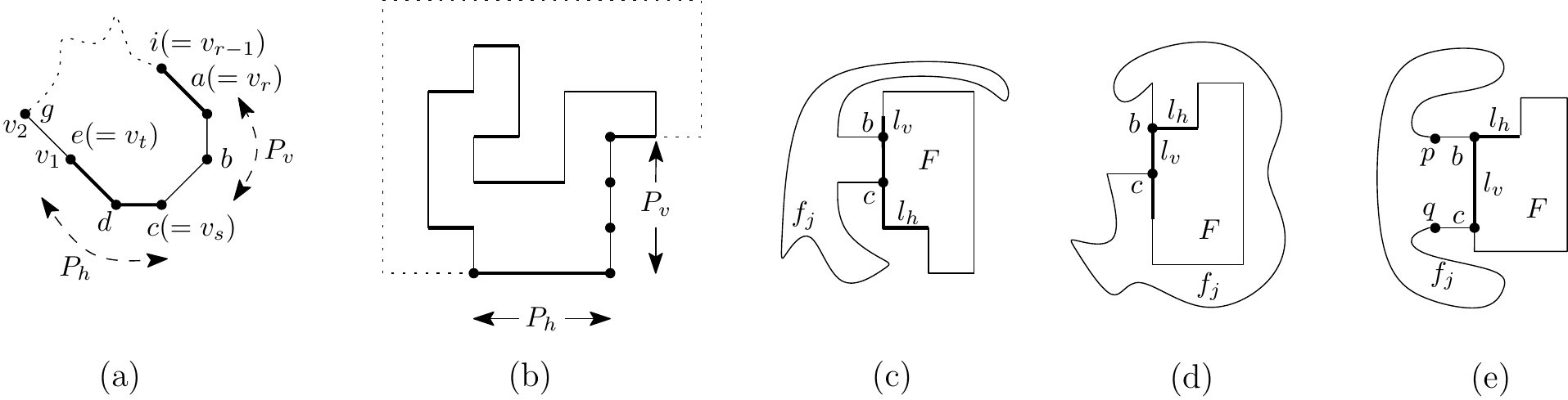}
\caption{(a) An inner face of $G$. (b) Illustration for $\Gamma_f$. (c)--(e) Illustration for the construction of  $\Gamma_{k+1}$.}
\label{fig:suf}
\end{figure}

We are now ready to describe the drawing of  $G$; hence, completing the proof of Theorem~\ref{thm:outer}. We first construct the drawing $\Gamma_f$ for some inner face $f$ of $G$. We then draw the other inner faces of $G$ by a depth first search on the faces of $G$, such that after adding a new inner face, the resulting drawing remains 
\begin{enumerate}[$(P_1)$]
\item a good orthogonal drawing, and
\item each critical edge respects Lemma~\ref{lem:bc}.
 \end{enumerate}
Let $\Gamma_k$ be  a drawing of the set of inner faces $f_1(=f),f_2,\ldots, f_{k}$ that we have already constructed. Let $f_{k+1}$ be an inner face of $G$ that has not been drawn yet, but has an edge $(b,c)$ in common with some face $f_j$, where $1\le j\le k$. Without loss of generality assume that $\lambda_{bc}=V$ in $\Gamma_k$. Furthermore, since $G$ is outerplanar, $f_{k+1}$ cannot have any edge other than $(b,c)$ in common with $f_j$. Let $l_v$ be a segment of $f_{k+1}$ that contains $(b,c)$, and let $l_h$ be another segment of $f_{k+1}$ incident to $l_v$.  We now construct $\Gamma_{k+1}$ considering the following cases.
\begin{description}
\item [Case 1: none of $b$ and $c$ is an end vertex of $l_v$.] In this case none of the end vertices of the path formed by $l_v$ and $l_h$ belongs to $\Gamma_k$.    
Since $G$ satisfies Condition ($C_3$), the edges of $f_j$ that are incident to $b$ and $c$ must be horizontal, i.e., $(b,c)$ must be a $v$-critical edge of $f_j$. Since $\Gamma_k$ is a good orthogonal drawing, there is enough space to create a flag $F$ with borders $l_v$ and $l_h$ such that the banner and  post of $F$ do not create any edge crossing. Figure~\ref{fig:suf}(c) illustrates such an example. By Lemma~\ref{cor:p}, we can draw $f_{k+1}$ inside $F$ maintaining Properties $(P_1)$ and $(P_2)$. Thus the resulting drawing $\Gamma_{k+1}$ satisfies ($P_1$)--($P_2$).

\item [Case 2: exactly one of $b$ and $c$ is an end vertex  of $l_v$.]  If $b$ (resp., $c$) is an end vertex of $l_v$, then we choose $l_h$ such that it contains $b$ (resp., $c$). Therefore, none of the two end vertices of the path formed by $l_v$ and $l_h$ belongs to $\Gamma_k$. Figure~\ref{fig:suf}(d) illustrates such an example. Similar to Case 1, we now draw $\Gamma_{k+1}$ satisfying $(P_1)$--$(P_2)$. 

\item [Case 3: both $b$ and $c$ are end vertices of $l_v$.] Observe that in this case $l_v=(b,c)$. Let $a,b,c,d$ be a path of $f_{k+1}$. Since $l_v=(b,c)$ is a maximal set of edges with vertical orientation, we have $\lambda_{ab}= \lambda_{bc}=H$. Thus  $l_v=(b,c)$ is a $v$-critical edge of $f_{k+1}$. We now create a flag $F$ with borders $l_v$ and $l_h$ such that the post of the flag is incident to $l_h$. Note that since $l_v$ is critical, by Lemma~\ref{cor:p}, we do not require a flag with its post incident to $l_v$. We now can draw $f_{k+1}$ inside $F$ maintaining $(P_2)$ and $(P_3)$.  Figure~\ref{fig:suf}(e) illustrates such an example. It may initially appear from the figure that drawing of $f_{k+1}$ inside $F$ may overlap the boundary of $f_j$, i.e., consider the Figure~\ref{fig:suf}(e) with $\lambda_{cq}=V$.  However, by definition of a flag, $F$ does not contain the part of its boundary that overlaps $f_j$, and hence drawing $f_{k+1}$ would not create any  edge overlapping.
\end{description}

\section{Proof of Lemma~\ref{cor:p}}
\label{sec:detail}
In this section, we prove Lemma~\ref{cor:p}. We first show the following auxiliary lemma, which proves the first condition of Lemma~\ref{cor:p}. 
\begin{lemma}
\label{lem:star}
If $f$ satisfies Conditions $(C_1)$--$(C_3)$, then we can find a good orthogonal  drawing of $f$ such that Lemma~\ref{lem:bc} holds for every critical edge of $f$. 
\end{lemma}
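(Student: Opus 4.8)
The plan is to construct the drawing of the single inner face $f$ directly, as an orthogonal polygon, controlling which reflex ($3\pi/2$) corners appear so that every critical edge is ``turned around'' by its two neighbours on the same side. First I would fix a clockwise traversal of $f$ and cut it into maximal segments $s_1, s_2, \dots, s_{2m}$ of constant orientation, alternating $H$ and $V$; condition $(C_1)$ guarantees $m\ge 2$, i.e.\ at least two segments of each orientation. A critical edge of $f$ is exactly a segment consisting of a single edge whose two neighbouring segments have the opposite orientation; so the critical $H$-edges are a subset of the length-one $H$-segments, and likewise for $V$. Lemma~\ref{lem:bc} for a critical edge $e$ says precisely that $e$ must be a reflex turn on one side, equivalently that traversing around $f$ the direction reverses at $e$: the two incident segments point the same way along the line through $e$. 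So the real content is a parity/counting statement about assigning ``left turn'' / ``right turn'' (i.e.\ $\pm\pi/2$) to the non-degenerate corners of $f$ so that (i) the turns sum to $+2\pi$ (the face is a simple orthogonal polygon traversed clockwise — or $-2\pi$ for the outer, but here $f$ is inner), and (ii) at each critical edge the two bounding corners are a ``left,left'' or ``right,right'' pair rather than ``left,right''.

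Next I would set up the turn-sequence formalism. Walking clockwise, each corner between two perpendicular segments contributes $+1$ (convex, $\pi/2$) or $-1$ (reflex, $3\pi/2$); a corner inside a segment contributes $0$ (the $\pi$ corners forced by same-orientation neighbours). For a simple polygon the signed turns sum to $4$ (in units of $\pi/2$), i.e.\ (number of convex corners) $-$ (number of reflex corners) $= 4$. The constraint at a critical edge $e$: if $e$ is an $H$-segment of length one with $V$-segments on both sides, then its two corners are consecutive in the ``corner sequence'' with no $0$ between them, and Lemma~\ref{lem:bc} forces them to be equal ($++$ or $--$). I would handle $v$-critical and $h$-critical edges symmetrically. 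The key subcase where the parity condition of $(C_2)$ is actually needed is when $C_v(f)=E_v(f)$: then \emph{every} vertical segment of $f$ is a single critical edge, so the face alternates (critical $V$-edge)(some $H$-segment)(critical $V$-edge)$\cdots$, and each critical $V$-edge must flip direction; chaining these flips around the closed walk, the number of flips must be even for consistency — which is exactly ``$C_v(f)$ even.'' Conversely when $C_v(f)<E_v(f)$ there is at least one non-critical vertical segment, which acts as a ``free'' place to absorb any parity defect (we can route it with whatever turn counts we need), so no constraint beyond $(C_1)$ and $(C_3)$ is needed there.

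Concretely, the construction I would give: choose a realization of the turn sequence as follows. If $C_v(f)<E_v(f)$ and $C_h(f)<E_h(f)$, pick one non-critical $V$-segment and one non-critical $H$-segment; declare all critical edges to be alternately oriented so each flips direction (forcing the $\pm$ pattern at each), put exactly the minimum reflex corners needed elsewhere, and use the two chosen free segments (by subdividing them into several collinear edges if needed, i.e.\ allowing extra $0$-corners, or rather by placing the needed convex/reflex corners at their endpoints) to fix the total turn to $4$ and to close the polygon up geometrically; since each such segment has $\ge 2$ edges we have the slack. If $C_v(f)=E_v(f)$ (so by $(C_2)$, $C_v(f)$ is even) then all $V$-segments are single critical edges that each flip; I would argue the flips compose consistently precisely because their number is even, and then use a non-critical $H$-segment (which exists unless $C_h(f)=E_h(f)$ too, a case ruled out for $\Delta=3$ since both can't hold simultaneously with $(C_1)$, or handled by the even count on the $H$-side) to absorb the remaining turn budget. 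After the turn sequence is fixed and consistent, realize it geometrically by walking out the edges with suitably chosen (large, generic) lengths; simplicity/non-crossing is obtained by taking edge lengths increasing fast enough (e.g.\ a ``staircase'' / spiral-out layout) so the polygon does not self-intersect — this is a standard argument.

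The main obstacle I expect is the bookkeeping in the case $C_v(f)=E_v(f)$ (and the symmetric one): making the ``chain of direction-flips closes up iff $C_v(f)$ is even'' argument airtight, and simultaneously guaranteeing that the forced $\pm$ patterns at all critical edges are \emph{compatible} with a total signed turn of $4$ and with non-self-intersection. One has to check that the forced reflex corners coming from critical edges never exceed what a simple polygon allows, and that the free (non-critical) segments really do give enough degrees of freedom; this is where $(C_3)$ (degree $\le 2$ of each orientation, so segments are genuine paths and corners are well-defined) and the $\Delta=3$ assumption (at most one of $P_h,P_v$ critical, no degree-4 interactions) get used. Once Lemma~\ref{lem:star} is in hand, upgrading to Lemma~\ref{cor:p} — getting the drawing inside a flag with the prescribed pair of borders $P_h,P_v$ and the post on the correct side — should follow by choosing $f$'s starting traversal at $P_h,P_v$ and reading off the bounding box of the constructed (staircase-like) drawing as the banner, with the single protruding part as the post.
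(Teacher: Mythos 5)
Your route is genuinely different from the paper's. The paper proves Lemma~\ref{lem:star} by an explicit incremental construction: it draws the face as a path $P'$ attached to two fixed border segments, enforces the critical-edge condition by a sequence of flips of suffixes of that path (its Phases 1--3), and shows that whenever the construction cannot be closed into a cycle the face would violate $(C_2)$. You instead reduce the lemma to a purely combinatorial turn-sequence problem (assign $\pm 1$ to the $2m$ non-flat corners so that the signs sum to $4$ and the two corners of each critical edge receive equal signs) and then invoke realizability of any orthogonal turn sequence with total turn $\pm 4$ by a simple polygon. That plan is legitimate and arguably cleaner: your translation ``Lemma~\ref{lem:bc} holds at $e$ iff the two turns at the endpoints of $e$ are equal'' is correct, and the realizability statement is a known theorem (in the style of Culberson--Rawlins or Vijayan--Wigderson) that you should cite or reprove via your spiral/staircase argument.

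However, the feasibility step --- the actual heart of the lemma --- is not carried through, and your sketch of it has a concrete gap. The equality constraints partition the $2m$ corners into classes that must receive a common sign: each critical edge glues its two corners into a class of size two, and no larger classes arise because no two critical edges can be consecutive (critical edges are inner edges, and in a biconnected outerplanar graph with $\Delta=3$ no vertex is incident to two inner edges --- a fact your argument needs but never establishes); all remaining corners are free singletons. Since the signs must sum to $4$, you must select classes of total size exactly $m-2$ to receive $-1$, where $2m$ is the number of segments and $C=C_v(f)+C_h(f)$ is the number of size-two classes. This selection fails only when $m-2$ is odd and there are no singletons, i.e.\ $C=m$; and $C=m$ with no two critical segments adjacent forces the critical segments to alternate strictly with the non-critical ones, hence (segments alternate $H$/$V$) to all have the same orientation, which is exactly the case $C_v(f)=E_v(f)$ or $C_h(f)=E_h(f)$ with odd count excluded by $(C_2)$. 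Your case split omits the intermediate possibility $0<C_v(f),C_h(f)$ with $C_v(f)+C_h(f)=m$ --- one must argue it cannot occur --- and your proposed source of slack, ``each such segment has $\ge 2$ edges,'' is wrong on two counts: a non-critical segment may consist of a single outer edge, and subdividing a segment only introduces flat $\pi$ corners, which contribute no turn; the only degrees of freedom are the signs of the free corners. Supplying this counting argument (and the non-adjacency of critical edges) closes the gap, after which your plan goes through.
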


\begin{proof}
Our proof is constructive. In the following we first construct a drawing $\Gamma_{f}$ of $f$, and then prove that $\Gamma_{f}$ is the required good orthogonal drawing. 

\subsection{Construction of $\Gamma_{f}$}
Since $f$ satisfies $(C_1)$, $P$ must contain at least three vertices (recall the definition of $P$ from Section~\ref{suff}). We first draw the path $P$ maintaining edge orientations, as shown in Figure~\ref{fig:drawing}(a). Let the drawing be $\Gamma_P$. We next draw $P'=(v_1,v_2,\ldots,v_r)$. 
 We draw  $P'$ starting at vertex $v_t(=v_1)$ of $\Gamma_P$, and then complete the drawing of $P'$ such that the position of $v_r$ coincides with its position in $\Gamma_P$. The details of the construction is presented in Steps 1--3 given below.

\paragraph{Step 1: satisfying $v$-critical edges.} Recall that $\lambda_{v_1v_2}=V$ and $\lambda_{v_{r-1}v_r}=H$. We now draw $P'$ starting at $v_1$ in $\Gamma_P$ with monotonically increasing $y$-coordinate and maintaining the edge orientations. Furthermore, we ensure that for each vertical segment $l$, the two horizontal segments incident to $l$ lie on the same side of $l$, e.g., see Figure~\ref{fig:drawing}(b). Hence, Lemma~\ref{lem:bc} holds for every $v$-critical edge $e$ in $P'$. It is straightforward to draw $P'$ without crossing any edge of $\Gamma_P$. However, as shown in Figure~\ref{fig:drawing}(b), the position of $v_r$ in the drawing of $P'$ may not coincide with its position in $\Gamma_P$. Let the resulting drawing of $P'$ be $\Gamma_{P'}$. Observe that $\Gamma_{P'}$ respects  the following property.
\begin{enumerate}[(A)]
 \item Lemma~\ref{lem:bc} holds for every $v$-critical edge $e$ in $\Gamma_{P'}$, i.e.,
 the edges preceding and following $e$ lie on the same side of $e$.
\end{enumerate}

\begin{figure}[t]
\centering
\includegraphics[width=0.9\textwidth]{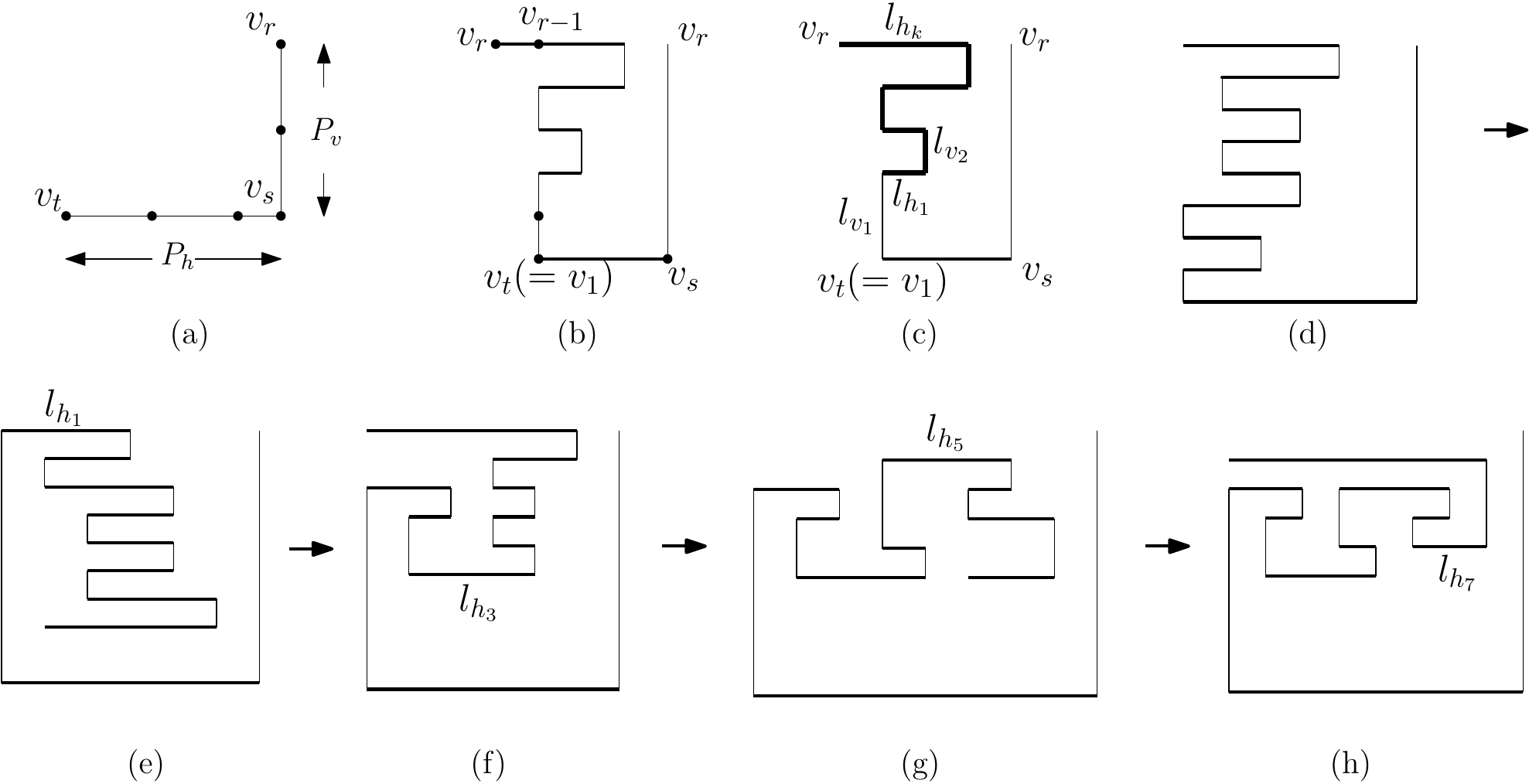}
\caption{(a) $\Gamma_P$. (b) Initial drawing of $P'$. (c) Illustration for Step $2$.}
\label{fig:drawing}
\end{figure}

\paragraph{Step 2: satisfying $h$-critical edges.} For every segment $l$ in $P'$, let $\Psi_{l}$ be the drawing of the subpath that starts at segment $l$ and ends at $v_r$. Let $l_{v_1},l_{h_1},l_{v_2},l_{h_2},...,$ $l_{v_k},l_{h_k}$ be the segments of $P'$ in clockwise order, where $l_{v_i}$ and $l_{h_i}$, $1\le i\le k$, denote the vertical and horizontal segments, respectively. For example, see Figure~\ref{fig:drawing}(c), where $\Psi_{l_{h_1}}$ is shown in bold. We now modify $\Gamma_{P'}$ in three phases.

\paragraph{Phase 1: fabricating.} For each $i$ from $1$ to $k$, where $i$ is odd, we flip  $\Psi_{l_{h_i}}$ with respect to $l_{h_i}$. For example, see Figures~\ref{fig:drawing}(d)-(h). It is straightforward to compute such a flip avoiding edge crossings by adjusting the edge lengths as necessary. Consequently, if $l_{h_i}$ is an $h$-critical edge, then Lemma~\ref{lem:bc} holds for $l_{h_i}$; i.e., the edges preceding and following $l_{h_i}$ lie on the same side of $l_{h_i}$. Observe that such vertical flips do not destroy Property (A).

We now consider $l_{v_i}$, where $i$ is even. For each $i$ from $2$ to $k$, where $i$ is even, if $l_{h_i}$ is an $h$-critical edge, then we first flip $l_{h_i}$ with respect to $l_{v_i}$, and then flip $\Psi_{l_{h_i}}$ with respect to $l_{h_i}$. Consequently, if $l_{h_i}$ is an $h$-critical edge, then Lemma~\ref{lem:bc} holds for $l_{h_i}$. Figure~\ref{fig:phase}(a)  illustrates an example.

Since we flip $l_{h_i}$ horizontally, it may initially appear that we are destroying Property (A) when $l_{v_i}$ is a critical edge. However, observe that we flip $l_{h_i}$ only if it is an $h$-critical edge. Since the maximum degree of $G$ is three,  $l_{v_i}$ cannot be a critical edge. Therefore, Property (A) still holds.

While modifying $\Gamma_{P'}$, it is straightforward to ensure that the downward infinite ray starting at $v_r$ do not cross any segment except possibly $P_h$. Furthermore, the infinite ray starting at $l_{h_k}$ towards right would intersect only $P_v$. Let the resulting drawing of $P'$ be $\Gamma'_{P'}$. By the construction of $\Gamma'_{P'}$, we can observe the following properties.
\begin{enumerate}
\item[(B)] The downward infinite ray in $\Gamma'_{P'}$ starting at $v_r$ do not cross any segment, except possibly $P_h$. The infinite ray starting at $l_{h_k}$ towards right intersects only $P_v$.
\item[(C)] Lemma~\ref{lem:bc} holds for every critical edge $e$ in $\Gamma'_{P'}$, except possibly $(v_{r-1},v_r)$.
\end{enumerate}

\begin{figure}[t]
\centering
\includegraphics[width=0.9\textwidth]{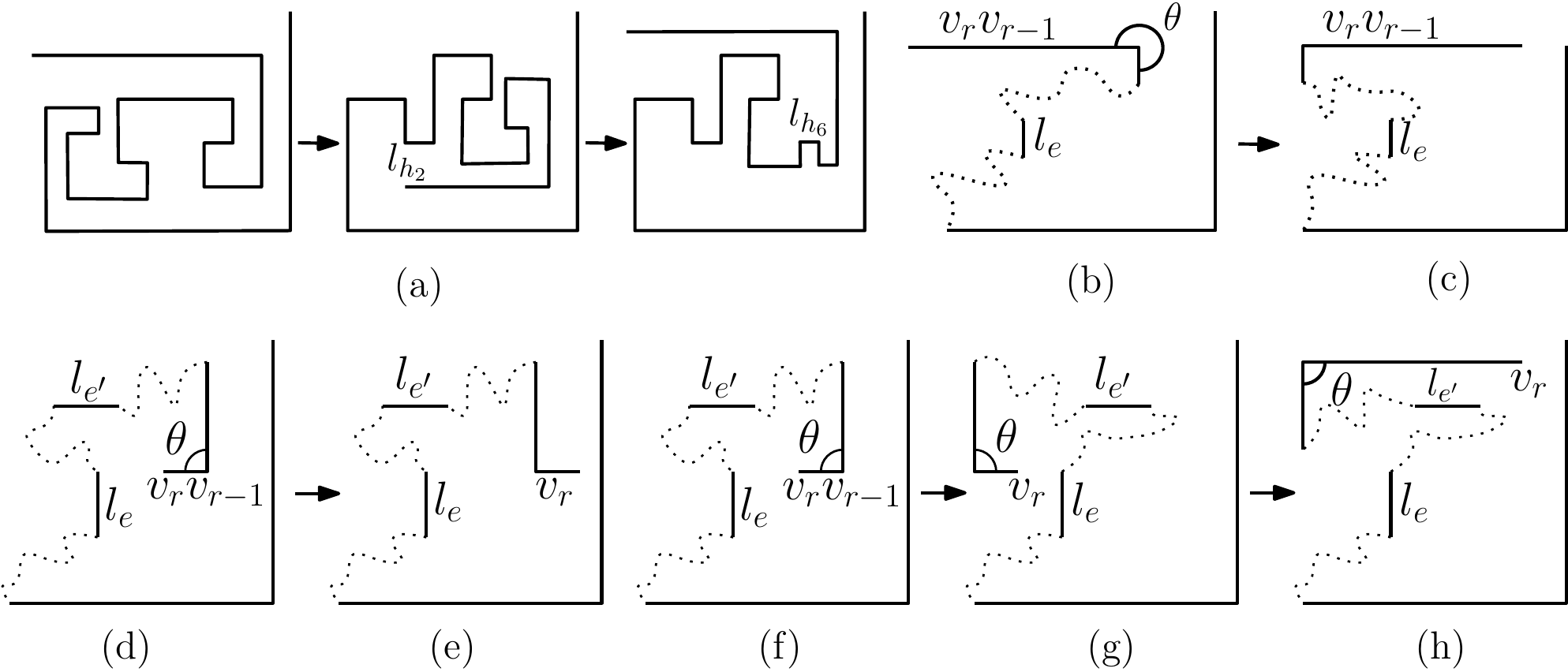}
\caption{(a) Illustration for Phase 1. (b)--(h) Illustration for Phase 2.}
\label{fig:phase}
\end{figure}

\paragraph{Phase 2: vertical adjustment.} Let $\theta$ be the angle interior to $f$ formed by segments $l_{v_k}$ and $l_{h_k}$. 
If $x(v_{r-1}) > x(v_r)$ and $\theta>\pi/2$  (e.g., Figure~\ref{fig:phase}(b)), then we find a non-critical vertical edge $e$ in $P'$.
 If there exists such an edge $e$, then let $l_e$ be the segment that contains $e$, and  we flip 
 $\Psi_{l_e}$ with respect to $l_e$  such that  $x(v_{r-1}) < x(v_r)$ holds, e.g., see Figures~\ref{fig:phase}(b)--(c).
 We can adjust the length of the segments such that the property $(B)$ still holds. 
 If $e$ does not exist, then we do not modify $\Gamma'_{P'}$, i.e., $x(v_{r-1}) > x(v_r)$ still holds.\smallskip

If $x(v_{r-1}) > x(v_r)$ and $\theta=\pi/2$  (e.g., Figure~\ref{fig:phase}(d)), then we find a non-critical vertical edge $e$ and a non-critical horizontal edge $e'$ in $P'\setminus l_{h_k}$. If $e'$ does not exist, then all horizontal edges are critical, and hence $l_{v_k}$ cannot be critical. Here we flip $l_{h_k}$ with respect to $l_{v_k}$, as shown in  Figures~\ref{fig:phase}(d)--(e).  Otherwise, if both $e$ and $e'$ exist, then let $l_e$ and $l_{e'}$ be the segments that contain $e$ and $e'$, respectively. We first flip $\Psi_{l_e}$ with respect to $l_e$  such that  $x(v_{r-1}) < x(v_r)$ holds. We then flip $\Psi_{l_e'}$ with respect to $l_e'$  such that Property (B) holds.
 For example, see Figures~\ref{fig:phase}(f)--(h).
 If $e$ does not exist, then we do not modify $\Gamma'_{P'}$, i.e., $x(v_{r-1}) > x(v_r)$ still holds.\smallskip 
 
Note that if $x(v_{r-1}) > x(v_r)$ holds even after the above analysis, then  all the vertical edges of $P'$ are critical; i.e., $\Gamma'_{P'}$ now satisfies the following additional property.

\begin{enumerate}[(D)]
\item If $x(v_{r-1}) > x(v_r)$ holds in $\Gamma'_{P'}$, then all of its vertical edges  are critical.
\end{enumerate}

\paragraph{Phase 3: horizontal adjustment.} Consider the segment $l_{v_k}$, and let $v_p$ and $v_q$ be its end vertices, where $v_q$ is closer to $v_r$ than $v_p$ in $P'$. If $y(v_p) > y(v_q)$, e.g., Figures~\ref{fig:phase3}(a) and (c), then we find a non-critical horizontal edge $e$ in $P'\setminus l_{h_k}$. If there exists such an edge $e$, then let $l_e$ be the segment that contains $e$, and  we flip $\Psi_{l_e}$ with respect to $l_e$  such that  $y(v_p) < y(v_q)$ holds. We can adjust the length of the segments such that the property $(B)$ still holds. Figures~\ref{fig:phase3}(b) and (d) illustrate such modifications. Otherwise, $e$ does not exist, and $y(v_p) > y(v_q)$ holds. $\Gamma'_{P'}$ now satisfies the following additional property.
\begin{enumerate}[(E)]
\item If $y(v_p) > y(v_q)$ holds in $\Gamma'_{P'}$, then all of its horizontal edges, except possibly $l_{h_k}$,  are critical.  If there exists a non-critical horizontal edge $e$ in $P'\setminus l_{h_k}$, then $y(v_p) < y(v_q)$.
\end{enumerate}

\begin{figure}[t]
\centering
\includegraphics[width=0.9\textwidth]{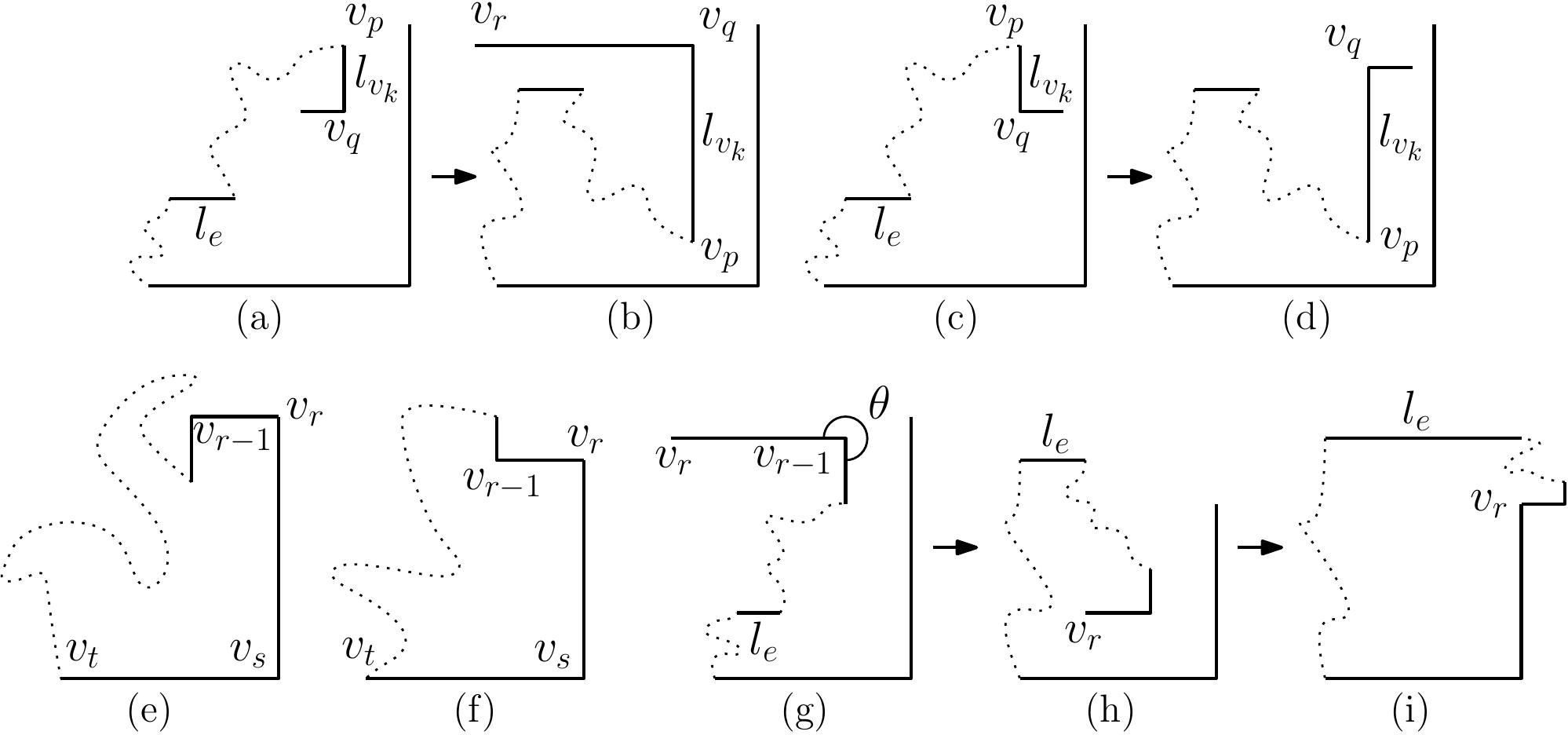}
\caption{ (a)--(d) Illustration for Phase 3. (e)--(i) Illustration for Step 3.}
\label{fig:phase3}
\end{figure} 
  
\paragraph{Step 3: completing the cycle.} We now modify $\Gamma'_{P'}$ such that the position of $v_r$ in $\Gamma'_{P'}$ and the position of $v_r$  in $\Gamma_{P}$ coincide. We first  check whether $x(v_{r-1}) < x(v_r)$ or not.

If $x(v_{r-1}) < x(v_r)$, then we scale up the drawing of $P'$ vertically and then extend the edge $(v_{r-1},v_r)$ such that the position of $v_r$ in $\Gamma'_{P'}$ and $\Gamma_p$ coincide, as shown in Figures~\ref{fig:phase3}(e)--(f). By Property (B), we can perform these modifications avoiding edge crossings. The resulting drawing is $\Gamma_{f}$, which still satisfies Property (C). Note that $\Gamma_{f}$ lies inside a flag with border $P_h$ and $P_v$, where the post of the flag is degenerate.
 
 Otherwise, $x(v_{r-1}) > x(v_r)$. Then by Property (D), all the vertical edges of $\Gamma'_{P'}$ are critical.
 Since $G$ is of maximum degree three, all the horizontal edges are non-critical. 
 By  Property (E), 
 $y(v_p)<y(v_q)$.
 This scenario is illustrated in Figure~\ref{fig:phase3}(g). Let $e$ be a horizontal non-critical edge. We then flip the drawing  $\Psi_{l_e}$ with respect to $l_e$, and then adjust the drawing such that the positions of $v_r$ in $\Gamma'_{P'}$ and $\Gamma_{P}$ coincide. An example is shown in Figures~\ref{fig:phase3}(g)--(i). The resulting drawing is $\Gamma_{f}$, which still satisfies Property (C) and the condition that $x(v_{r-1}) > x(v_r)$.  Note that $\Gamma_{f}$ lies inside a flag with border $P_h$ and $P_v$, where the post of the flag is incident to $P_v$.

\subsection{$\Gamma_{f}$ satisfies Lemma~\ref{lem:star}}
We now show that $\Gamma_{f}$ 
is a good orthogonal drawing and Lemma~\ref{lem:bc} holds for each of its critical edges. 
 By construction,  $\Gamma_{f}$ 
 is planar. To see that $\Gamma_{f}$ 
 respects edge orientations, observe that the drawing  $\Gamma_{P'}$ obtained from Step $1$ respects edge orientations. Later, Steps $2$--$3$ only modify the drawing by flipping some segments vertically and horizontally, which does not destroy the edge orientations. We now prove that Lemma~\ref{lem:bc} holds for every critical edge. 
\begin{description}
\item [Case 1: $v$-critical edge on $P'$.] Since $\Gamma'_{P'}$ satisfies Property (C), Lemma~\ref{lem:bc} holds for every $v$-critical edge on $P'$.
\item [Case 2: $v$-critical edge on $P_v$.] In this case $P_v$ consists of only a single edge, i.e., $(v_r,v_s)$. If the angle interior to $f$ at $v_r$ is equal to $\pi/2$, e.g., see Figures~\ref{fig:phase3}(e)--(f), then Lemma~\ref{lem:bc} holds for $(v_r,v_s)$. Otherwise, the  angle is greater than $\pi/2$, i.e., $x(v_{r-1})>x(v_r)$, as shown in Figure~\ref{fig:phase3}(i). According to Property (D), all the vertical edges on $P'$ must be critical. Hence, $C_v(f)= E_v(f)$. We show that this case cannot appear by proving that $C_v(f)$ must be odd, i.e., $f$ violates Condition $(C_2)$. Hence Lemma~\ref{lem:bc} must hold.

Let $e_t\in P'$  be a  topmost and let and $e_b\in P_h$ bottommost horizontal edge in $\Gamma_f$. We then  can find two disjoint paths $P_l$ and $P_r$ by traversing $f$ counter-clockwise and clockwise from $e_t$ to $e_b$, respectively. Since $C_v(f)= E_v(f)$, by Lemma~\ref{lem:bc2}, $P_l$ and $P_r\setminus (v_r,v_s)$ each must have odd number of $v$-critical edges. Since $(v_r,v_s)$ itself is a critical edge, the total number of $v$-critical edges; i.e., $C_v(f)$ must be odd and hence $f$ violates Condition $(C_2)$.

\item [Case 3: $h$-critical edge on $P'$.] By Property (C), Lemma~\ref{lem:bc} holds for every $h$-critical edge on $P'$, except possibly $(v_{r-1},v_r)$. Consider now the case when $(v_{r-1},v_r)$ is $h$-critical. Let $v_p$ and $v_q$ be the end vertices of $l_{v_{k-1}}$, where $v_q$ is closer to $v_r$ in $P'$. If $y(v_{p})< y(v_{q})$: e.g., see Figure~\ref{fig:phase3}(e), then Lemma~\ref{lem:bc} holds. 

Now consider the case when $y(v_{p})> y(v_{q})$ (i.e., Figures~\ref{fig:phase3}(f) and (i)). If $x(v_{r-1})>x(v_r)$, then by Property (D), all the vertical edges would be critical and thus we will not have any $h$-critical edge in $P'$. Hence we only need to consider the case when   $x(v_{r-1})<x(v_r)$ (i.e., Figure~\ref{fig:phase3}(f)). In the following we show that in such a scenario $f$ must violate Condition $(C_2)$. Hence Lemma~\ref{lem:bc} must hold. 
  
Since $y(v_{p})> y(v_{q})$ and $x(v_{r-1})<x(v_r)$, by Property (E), all the horizontal segments of $\Gamma'_{P'}$, except possibly $l_{h_k}$, are $h$-critical. Since $l_{h_k}$ is a critical edge,  all the horizontal  segments on $P'$ are $h$-critical. Using an argument similar to Case 2 (i.e., by choosing a leftmost and a rightmost vertical edge and using Lemma~\ref{lem:bc2}), we can observe that the number of $h$-critical edges excluding  $l_{h_k}$ is even. Since $l_{h_k}$ is $h$-critical, $C_h(f)$ must be odd; i.e., $f$ violates Condition $(C_2)$.
\end{description}
\end{proof}

We are now ready to prove Lemma~\ref{cor:p}.
\begin{proof}[Proof of Lemma~\ref{cor:p}]
Let $\Gamma_f$ be a drawing produced by Steps 1--3. Then by Lemma~\ref{lem:star}, $\Gamma_f$ is a good orthogonal drawing and Lemma~\ref{lem:bc} holds for every critical edge in $\Gamma_f$. We now show that $\Gamma_f$ satisfies the remaining conditions of Lemma~\ref{cor:p}, i.e., 
\begin{enumerate}[-]
\item $\Gamma_f$ is contained in a flag $F$ with borders $P_h$ and $P_v$.
\item If $P_h$ is a critical edge, then the post of $F$ (if exists) is incident to $P_v$. Similarly,  if $P_v$ is a critical edge, then post of $F$ (if exists) is incident to $P_h$. 
\end{enumerate}  

The first property is implied by the construction of $\Gamma_f$ as follows. Steps $1$--$2$ ensure that the drawing of $P'$ lies inside the banner of some flag with borders $P_h$ and $P_v$, whereas Step 3 creates the post of the flag, e.g., see Figure~\ref{fig:phase3}(i). 

The second property holds because Lemma~\ref{lem:bc} holds for every critical edge of $\Gamma_f$. Specifically, if $e$ is a critical edge, where either $e=P_v$ or $e=P_h$ and the post is incident to $e$, then the two horizontal  segments incident to $e$ will lie on different sides of $e$; i.e., Lemma~\ref{lem:bc} would be violated.
\end{proof}

\section{Conclusion}
\label{sec:con}
In this paper, we gave a polynomial-time algorithm to decide good orthogonal drawability of $HV$-restricted plane graphs, and moreover, fully characterized $HV$-restricted biconnected outerplanar graphs that admit good orthogonal drawings. If we relax the biconnectivity constraint, then our characterization no longer holds. For example, the $HV$-restricted outerplanar graph $G$ of Figure~\ref{fig:hardex}(b) satisfies Conditions $(C_1)$-$(C_3)$ of Theorem~\ref{thm:outer}, but does not admit a good orthogonal drawing.

\begin{figure}[t]
\centering
\includegraphics[width=0.85\textwidth]{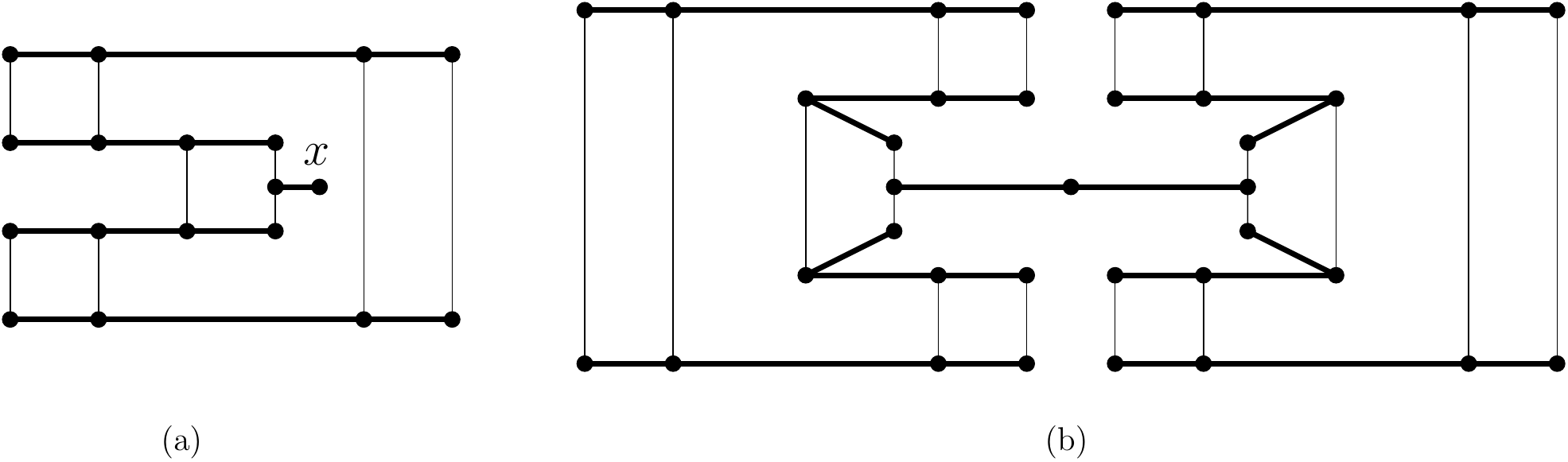}
\caption{Illustration for the graphs (a) $H$ and (b) $G$.}
\label{fig:hardex}
\end{figure}

Observe that  $G$ is constructed from two copies of the graph $H$ of Figure~\ref{fig:hardex}(a), where the vertices with label $x$ are identified. Since in any good orthogonal drawing of $H$ the vertex $x$  lies in some inner face, any orthogonal drawing of $G$ preserving edge orientations must contain edge crossing. Hence, a natural open question is to extend the characterization for arbitrary outerplanar graphs.

\bibliographystyle{plain}
\bibliography{ref}

\end{document}